\newtheorem{Theorem}{Theorem}[section]
\newtheorem{Definition}[Theorem]{Definition}
\newtheorem{Lemma}[Theorem]{Lemma}
\newtheorem{Proposition}[Theorem]{Proposition}
\newtheorem{Bemerkung}[Theorem]{Remark}
\newtheorem{remark}[Theorem]{Remark}
\newtheorem*{assum*}{Assumptions}
\numberwithin{equation}{section}
\newcommand{\x}{x}
\newcommand{\xp}{y}
\newcommand{\C}{\mathbb{C}}
\newcommand{\R}{\mathbb{R}}
\newcommand{\N}{\mathbb{N}}
\newcommand{\be}{\begin{equation}}
\newcommand{\ee}{\end{equation}}
\newcommand{\halb}{\frac{\beta}{2}}
\newcommand{\la}{\begin{picture}(4.5,7)
\put(1.1,2.5){\rotatebox{60}{\line(1,0){5.5}}}
\put(1.1,2.5){\rotatebox{300}{\line(1,0){5.5}}}
\end{picture}}
\newcommand{\ra}{\begin{picture}(4.5,7)
\put(.9,2.5){\rotatebox{120}{\line(1,0){5.5}}}
\put(.9,2.5){\rotatebox{240}{\line(1,0){5.5}}}
\end{picture}}
\begin{document} 
\title[]{On the van der Waals interaction between a molecule and a half--infinite plate}	

\author[]{Ioannis Anapolitanos,  Mariam Badalyan,	Dirk Hundertmark }
\address{Department of Mathematics, Institute for Analysis, Karlsruhe Institute of Technology, 76131 Karlsruhe, Germany.}%
\email{ioannis.anapolitanos@kit.edu}%
\address{Department of Mathematics, Institute for Analysis, Karlsruhe Institute of Technology, 76131 Karlsruhe, Germany.}%
\email{uqeac@student.kit.edu}%
\address{Department of Mathematics, Institute for Analysis, Karlsruhe Institute of Technology, 76131 Karlsruhe, Germany, and Department of Mathematics,  
University of Illinois at Urbana-Champaign, 1409 W.\ Green Street (MC-382) 
Urbana, Illinois 61801, USA}%
\email{dirk.hundertmark@kit.edu}%

\date{\today, version \jobname }

\begin{abstract}  We consider a molecule in the Born-Oppenheimer approximation interacting with a plate of infinite thickness, i.e, a half--space, 
which is perfectly conducting or dielectric. It is well--known in the physics literature that in this case the atom or molecule is attracted by the 
plate at sufficiently large distances.  This effect is analogous to the well--known van 
der Waals interaction between neutral atoms or molecules. 
We prove that the interaction energy $W$  of the system is  
given by $W(r,v)= -C(v)r^{-3} + \mathcal{O}(r^{-4})$,
 where $r$ is the distance between the molecule and the plate and $v$ 
indicates their relative orientation. Moreover, $C(v)$ is positive 
and continuous, thus the atom or molecule is always pulled towards the 
plate at sufficiently large distances, for all relative orientations $v$. For some specific systems we provide sharper estimates of $W(r,v)$. 
This asymptotic behavior is well--known in the physics literature,  
however, we are not aware of any previous rigorous results, even on existence of a ground state of the system. 
  For pedagogical reasons, we often start with the case of a hydrogen atom and then we generalize the arguments to deal with a general molecule. 
 \end{abstract}
\thanks{\copyright 2020 by the authors. Faithful reproduction of this article, in its entirety, by any means is permitted for non-commercial purposes}

\maketitle
{\hypersetup{linkcolor=black}
\setcounter{tocdepth}{1}
\tableofcontents}


\section{Introduction}

Van der Waals  forces are usually studied between atoms or molecules. 
They are weaker than ionic or covalent bonds and decay rapidly with distance. 
Due to their universal nature, they play an important role in many different 
fields such as physics, quantum chemistry or material sciences and are 
important for the macroscopic properties and physical behavior of 
numerous materials. For example, they can significantly  
influence melting and boiling temperatures.  
They explain why diamond, which 
consists  of carbon atoms that are connected only with covalent bonds, is 
a much harder material than graphite, which consists of layers of carbon 
atoms that attract each other through van der Waals forces, see \cite{Cha}. 
For this reason, they have been studied extensively in the physics 
literature, see e.g. some classical works  \cite{Feinberg},  \cite{Jones1}, \cite{Jones2}, \cite{Lo},   \cite{vdW1}, \cite{vdW2} and some more recent \cite{Beguin}, \cite{Bjork}, \cite{DiStatio}, \cite{sutter}, to name a few.




As a consequence, it is important to look at the van der 
Waals forces also from a theoretical and even mathematically rigorous point of view. 
J.~D.~Morgan and B.\ Simon proved in 1980 the 
existence of an asymptotic expansion of the van der Waals interaction energy 
of two neutral atoms in powers of one over the distance of their nuclei 
for large nuclear separations \cite{morgansimon}, assuming non-degeneracy 
of the ground state energies of the atoms. However, they did not identify the 
coefficients in such an expansion, in particular, it is not clear from their 
method that the leading order coefficient responsible for the van der Waals 
type attraction is non-zero. 
Later,  E.\ H.\ Lieb and W.\ E.\ Thirring used variational methods 
to prove an \emph{upper bound} for the interaction energy, which shows  
the existence of van der Waals forces for Coulomb systems 
\cite{liebthirring}, like systems of molecules. 
Roughly two and a half decades later,  I.M.\ Sigal and the first author 
\cite{anapolitanossigal} provided rigorously the leading term of  the 
long range behavior of the van der Waals interaction energy between atoms  under some 
conditions.  In \cite{paperjannis} the methods of \cite{anapolitanossigal} were refined in order to study the long range asymptotics coupled with the asymptotics of the number of atoms going to infinity.  M. Lewin, M. Roth and the first author investigated in \cite{anapolitanoslewinroth} the derivative of the interaction energy of two atoms, which in physics is interpreted as force, and provided the leading order of the interaction energy with no assumptions on the  ground state eigenspace for one of the two atoms. 
M.\ Lewin and the first author improved in \cite{anapolitanoslewin} the 
upper bound of Lieb and Thirring,  and under some assumptions provided 
the leading terms of the long range behavior of the van der Waals forces 
between molecules and they used these results to study isomerizations.  
 Recently, the van der Waals forces between atoms were investigated in 
 the case of semi-relativistic kinetic energy in \cite{barbaroux} by 
 J.M. Barbaroux, M. Hartig, S. Vugalter and the third author. They also 
 rigorously proved the famous Axilrod--Teller--Muto $D^{-9}$ three body 
 correction, see \cite{axilrod-teller} and \cite{muto}, which plays an important role in the case of three or 
 more interacting atoms.
 Note that in the case of two hydrogen atoms the leading term coefficient 
 of the interaction energy was approximated numerically by E. Cancès and 
 L.R. Scott in \cite{paper2hyd} with a proof of convergence of the 
 numerical scheme. Their numerical scheme is based on a modification of 
 a technique introduced in \cite{KSl}. The proof of convergence of their 
 numerical scheme  partially uses methods of \cite{anapolitanossigal}. More recently E. Cancès, R. Coyaud 
and L.R. Scott studied coefficients of higher order terms of the interaction energy of two hydrogen atoms numerically in  \cite{paper2hydhigherorders}.

The specific problem of the van der Waals interaction between a particle and a dielectric or perfectly conducting  plate is of interest to physicists as well and has been studied in the physics literature, see e.g. \cite{bardeen}, \cite{mavroyannis}, \cite{Bezerra}, \cite{Caride}, \cite{papercasimirpolder}, \cite{Mahanty}, \cite{peyrot}. One example of this type of interaction is the deflection of beams of particles by uncharged surfaces, see \cite{raskinkusch}, \cite{shih}. Another one is  the adhesion power of a gecko's foot \cite{autumn2}. These reptiles are famous for their ability to adhere to very smooth surfaces without  adhesives like glue or suction cups but thanks to the very special structure of their feet. Several studies e.g. \cite{autumn1}, \cite{autumn2}, have claimed that the main contribution to this is due to the van der Waals force.
A more recent study  claims that the main contribution to this force comes from  contact electrification \cite{izadi}, nevertheless the van der Waals force still contributes to it. 

In this article, we look at a system consisting of a molecule in the vacuum interacting with a perfectly conducting or dielectric plate, more precisely, with a half--space, and  estimate its interaction energy $W(r,v)$, depending on the distance $r$ of the molecule to the plate and on their relative orientation $v$. This regime creates technical difficulties as it is modelled with Schrödinger operators for systems of atoms acting on a half--space. Thus, fundamental properties like existence of a ground state of the system and identification of its essential spectrum were, to the best of our knowledge, unknown  and part of this work is to study them. On the other hand this regime is at the same time easier because mirror charges are strictly correlated to the true ones in contrast e.g. to the case of two interacting atoms or molecules. This simplicity allows us to obtain stronger results, most importantly we provide the leading term of the interaction energy with no assumptions on the ground state eigenspace of an atom/molecule, something that hasn't been done in any previous work in the field.
\smallskip 

 The paper is organized as follows. In Section \ref{modelinghydrogenwall} we introduce and derive the 
Hamiltonian describing the hydrogen--dielectric  plate system.  The case of a perfectly conducting plate is derived in an appropriate limit.  We then derive the Hamiltonian for a general molecule  interacting with the dielectric plate and explain some of its basic properties.
 Afterwords we state a binding Condition, see equation \eqref{molvermutung} below, which we assume and is in fact a necessary condition for our result. We also explain why \eqref{molvermutung} is physically expected. We state then our main Theorem, which in the special case of a hydrogen atom is sharper. As we explain below, conditions that are somewhat analogous to \eqref{molvermutung} had to be assumed to derive the van der Waals law for a system of atoms. 
 In Section \ref{preparations}
we prove basic properties of the system, e.g. that the Hamiltonian of the system is bounded from below and for the case of a hydrogen atom interacting with the plate we prove a Weyl type theorem for its essential spectrum. This helps us prove that the ground state energy is below the essential spectrum, in particular, the Hamiltonian has a ground state, at least when the nucleus is not too close to the plate.  We moreover prove \eqref{molvermutung} for the special cases of a helium and a hydrogen atom. 
 Furthermore, we introduce the Feshbach map which is a main ingredient of the proof. In Section \ref{Beweis} we prove the main Theorem for the case of the hydrogen atom. In Section \ref{generalizationmolecule} we generalize the proof done for a hydrogen atom to the case of a general molecule. Since many ideas are similar in this case, we focus on the modifications of the proof. In Section \ref{proofnecessity}
 we sketch the proof of the fact that \eqref{molvermutung} is a necessary Condition for the main Theorem. 
  Finally, in Appendix \ref{spiegelladung} we provide some detailed explanations concerning the derivation of the Hamiltonian of the system. 
\newline

\noindent {\bf Acknowledgments.} It is a pleasure to thank 
	Kurt Busch and Francesco Intravaia for suggesting the problem, Semjon Vugalter for inspiring discussions,  Carsten Rockstuhl for helpful discussions on the physics of the problem, and  Francois Cornu for discussing his work \cite{cornumartin}. We gratefully acknowledge an anonymous referee for numerous useful comments that helped us improve our work. 
 We also thank the Deutsche Forschungsgemeinschaft (DFG, German Research Foundation) for financial support -- Project-ID 258734477 -- SFB 1173.


\section{Modeling of the problem and main result}\label{modelinghydrogenwall}

We first model the problem starting with the special case of a hydrogen atom and afterwards for a general molecule.

\subsection{Hamiltonian in the special case of a hydrogen atom}

We will follow \cite{cornumartin} to model the problem.
We first consider a hydrogen atom and a perfectly conducting plate placed in a vacuum. Let $(0,0,0)^t$ be the position of the nucleus and $x=(x_1,x_2,x_3)^t\in\R^3$ be the position of the electron. Further we assume without loss of generality that the plate is orthogonal to $\vec{e}_1=(1,0,0)$, and passes through $r\vec{e}_1$, hence $r$ is the distance between the nucleus of the hydrogen atom and the plate. Moreover, the plate is on the right of the atom, see Figure 1. \\

\begin{center}
\includegraphics[scale=0.3]{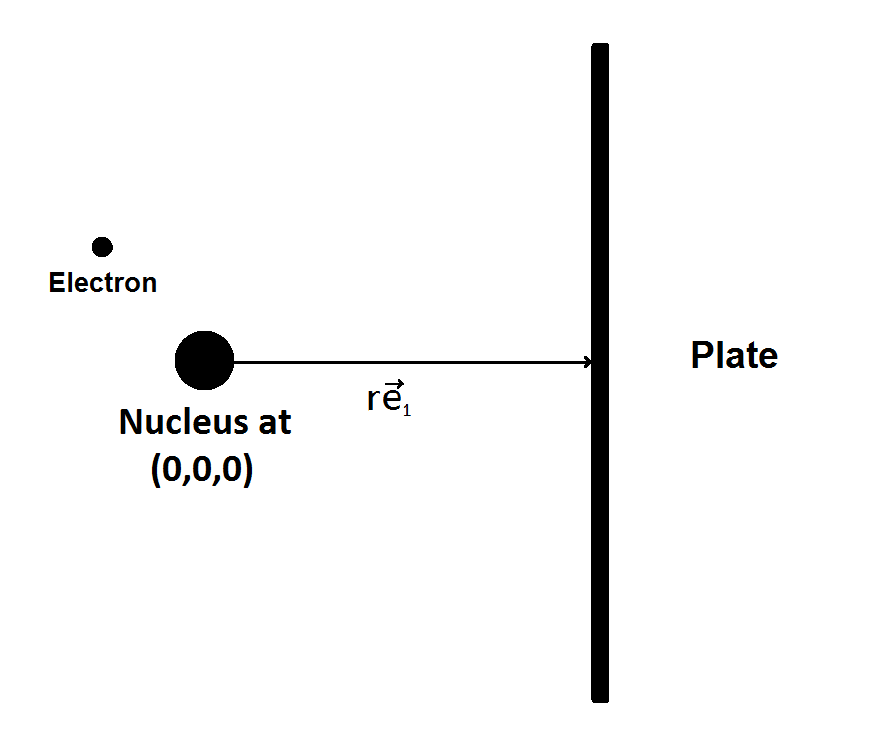}\\
\small Figure 1 
\end{center}

Because the van der Waals forces decay rapidly with the distance, the atom has to be very close to the plate for these forces not to be negligible. Furthermore, the height, width and thickness of the plate are so large in comparison to the distance of the atom to the plate, that we can assume they are infinite, unless the plate is extremely thin, a case that we do not investigate in the present work.

As usual, we work with the Born-Oppenheimer approximation, which assumes that the nucleus is at a fixed position. This approximation relies on the fact that the mass of the nucleus is much larger than the mass of the electron. 
 For a discussion of the validity of the Born-Oppenheimer approximation see e.g. \cite{anapolitanossigal} and references therein.

 The Hamiltonian as a function of the distance $r$  is given by the sum of the kinetic energy of the electron and the interaction energies (potential energy) of the system:

\be \begin{split}
	\widehat{H}&=\widehat{T}+\widehat{V}\\
	&= -\frac{\hbar^2}{2m}\Delta_{x}+V(r)
\end{split}\ee

In order to determine the potential energy $V(r)$ of our system, we use the method of image charges, which guarantees that the potential on the surface of the plate is zero. If the plate is a perfectly conducting  plate, then  we just have to introduce, for the electron at $x$,  a "positive charge" ($e^+$) at the mirror image, with respect to the surface of the half--space, at $2re_1+x^*$, where $x^*=(-x_1, x_2, x_3)^t$  and for the nucleus a "negative charge" ($K^-$) in $(2r,0,0)^t$, see Figure 2.

\begin{center}
\includegraphics[scale=0.3]{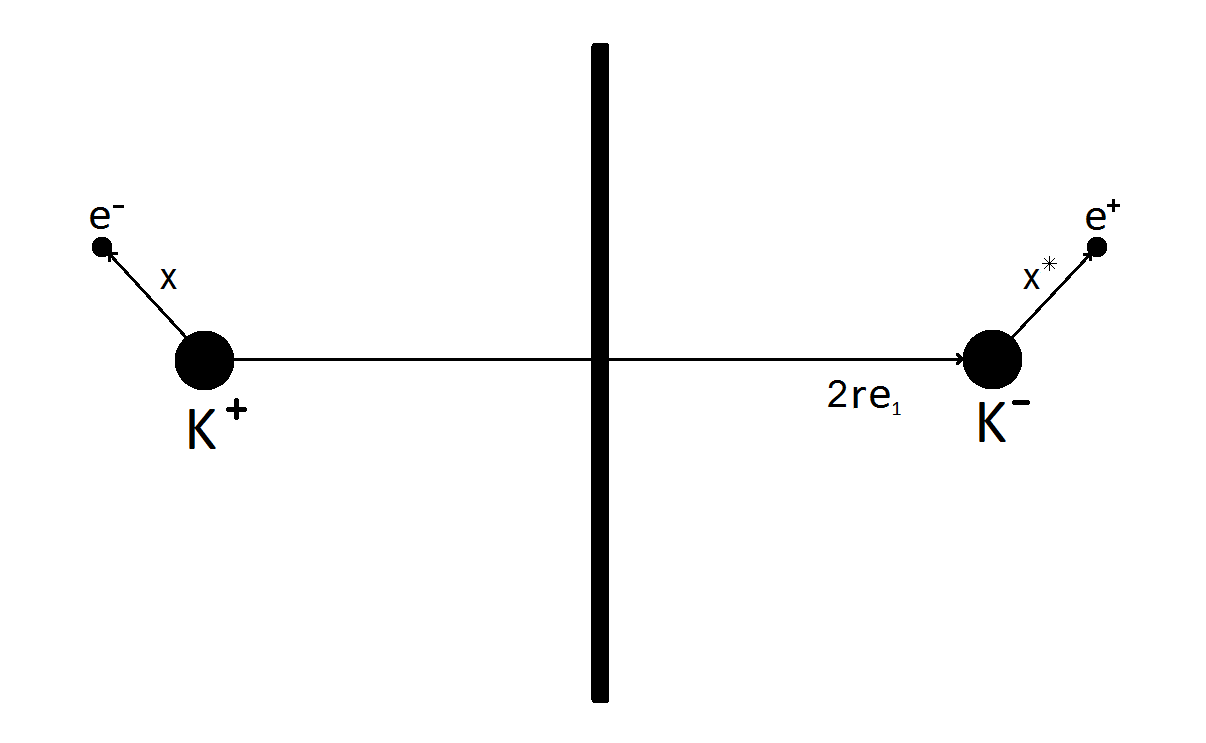}\\
\small Figure 2 
\end{center}

 Thus we  obtain a system of point charges and the sum of their Coulomb potentials is the potential energy $V(r)$. 
 The Hamiltonian of the system in the non-relativistic case is then given by 

\begin{align*}
H(r)=&-\frac{\hbar^2}{2m}\Delta_{x}-\underbrace{\frac{ e^2}{4\pi\epsilon_0\vert x\vert}}_{\substack{attraction \\e^{-}/ K^+}}+\halb\bigg(\underbrace{-\frac{ e^2}{4\pi\epsilon_0(2r)}}_{\substack{attraction\\ K^+/K^-}}-\underbrace{\frac{ e^2}{4\pi\epsilon_0\vert -x +2re_1 + x^*\vert}}_{\substack{attraction\\e^{-}/e^{+}}}\\
&+\underbrace{\frac{ e^2}{4\pi\epsilon_0\vert 2re_1-x\vert}}_{\substack{repulsion\\e^{-}/ K^{-}}}+\underbrace{\frac{ e^2}{4\pi\epsilon_0\vert 2re_1 + x^*\vert}}_{\substack{repulsion\\e^{+}/ K^+}}\bigg),
\end{align*}
where $\hbar$ is the reduced Planck constant, $m$ the mass of the electron, $e$ the elementary charge and $\epsilon_0$ the vacuum permittivity. For a general dielectric plate we have
\begin{equation}\label{dielectriccon}
 \beta:=\frac{\epsilon-\epsilon_0}{\epsilon+\epsilon_0}
\end{equation}
  where $\epsilon>\epsilon_0$ is the permittivity of the plate. The perfectly conducting case corresponds to the limit $\epsilon \to \infty$ and in this case $\beta=1$. In general we have 
\begin{equation}\label{betarange}
\beta \in (0,1].
\end{equation}
 We refer to Appendix \ref{spiegelladung} for more explanations. 

We scale a normalized wave function in $ \psi  \in L^2(\R^3)$ by $\psi_\alpha=\alpha^{\frac{3}{2}}\psi(\alpha x)$, so 
$\psi_\alpha$ is  normalized as well. Then
\be\label{res1}
\la\psi_\alpha|\Big(-\Delta-\frac{\alpha}{|x|}\Big)\psi_\alpha\ra=\alpha^2\la\psi|\Big(-\Delta-\frac{1}{|x|}\Big)\psi\ra,
\ee
and consequently we obtain
\be\label{res2}
\inf\sigma\Big(-\Delta-\frac{\alpha}{|x|}\Big)=\alpha^2\inf\sigma\Big(-\Delta-\frac{1}{|x|}\Big).
\ee
Based on this type of argument we can simplify our computations, setting  $\frac{\hbar^2}{2m}=1$ and $\frac{ e^2}{4\pi\epsilon_0}=1$. In that case  we find 

\be \nonumber
H(r)=-\Delta_{x}-\underbrace{\frac{1}{\vert x\vert}}_{\substack{attraction \\e^{-}/ K^+}} + \halb\bigg(-\underbrace{\frac{1}{2r}}_{\substack{attraction\\ K^+/K^-}}-\underbrace{\frac{1}{\vert -x+2re_1+x^*\vert}}_{\substack{attraction\\e^{-}/e^{+}}} $$$$+\underbrace{\frac{1}{\vert 2re_1 -x\vert}}_{\substack{repulsion\\e^{-}/ K^{-}}}+\underbrace{\frac{1}{\vert 2re_1 +x^* \vert}}_{\substack{repulsion\\e^{+}/ K^+}}\bigg).
\ee

Because of the symmetry of the problem $\vert 2re_1 + x^* \vert=\vert 2re_1-x\vert$ and the Hamiltonian becomes 
\vspace*{0.5cm}
\be\label{defHlin} H(r)=-\Delta_{x}-\frac{1}{\vert x\vert}+ \halb U(x),\ee
where 
\be\label{defU} U(x)=-\frac{1}{2r}-\frac{1}{\vert 2re_1-(x-x^{*})\vert}+\frac{2}{\vert 2re_1-x\vert}.\ee

Observe that the electron-mirror electron attraction term $-\frac{1}{\vert 2re_1-(x-x^{*})\vert}$ is not locally $L^1$ coming from the fact that the function $\frac{1}{e_1 \cdot z}$ is not locally integrable in $\R_+ \times \R^2$. Thus without any boundary conditions $H(r)$ would not be bounded from below. 
As we will see below, the Hamiltonian $H(r)$ acts on wavefunctions on the halfspace satisfying Dirichlet boundary condition, which ensures that the ground state energy is not $-\infty$. 
 
The potential $U(x)$ is a dipole-dipole interaction and comes from orientations of attraction and it is clear that it is attractive for large distances. 
That the potential  $U(x)$ is always attractive  is seen by the following lemma:
\begin{Lemma}\label{lemmatrapezoid}
The potential $U(x)$ given in \eqref{defU} satisfies $U(x) \leq 0$ with equality if and only if $x=0$.
\end{Lemma}
\begin{proof}
		We consider an isosceles trapezoid with diagonal length $b$ and let $a,c$ be the lengths of the parallel sides. Then
		\begin{equation}\label{ineqtrap}
		\frac{2}{b} \leq \frac{1}{a} + \frac{1}{c}.
		\end{equation}
	Indeed, assume without loss of generality that $a \geq c$.
	Since $b$ is the diagonal we have that $b \geq \frac{a+c}{2}$. One can see that by bringing a height $h$ of the trapezoid to the side $c$ as in Figure 3 below. This creates an orthogonal triangle with hypotenuse $b$ and $\frac{a+c}{2}$ one of the other sides.  Since $a+c \geq 2 \sqrt{ac}$, we obtain that $b \geq \sqrt{ac}$ or that $\sqrt{\frac{b}{a}}  \sqrt{\frac{b}{c}} \geq 1$. This in turn gives that $\frac{b}{a} + \frac{b}{c} \geq 2$ as desired.
	\begin{center}
		\includegraphics[scale=0.3]{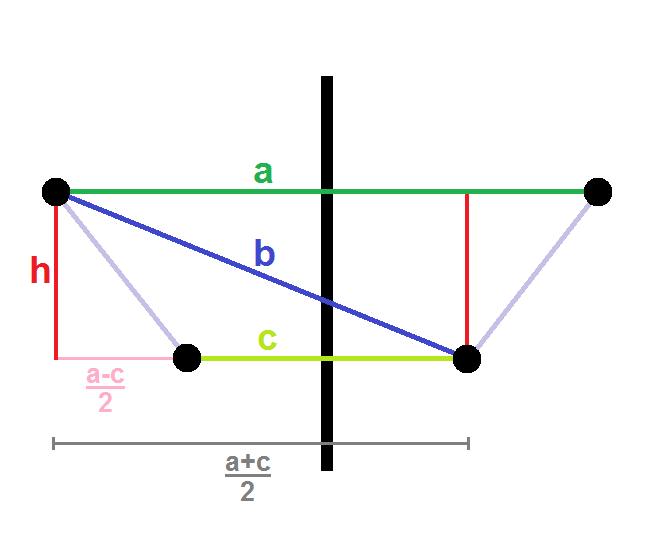}\\
		\small Figure 3: Isosceles trapezoid formed by the  electron  the nucleus and their mirror charges. 
	\end{center}
From \eqref{ineqtrap} and \eqref{defU} it follows immediately that $U(x) \leq 0$. To see when equality holds it is enough to see when equality in \eqref{ineqtrap} holds.
From the proof of \eqref{ineqtrap}  one easily sees that if equality holds in  \eqref{ineqtrap} then one must have $b=a+c$, 
so all vertices of the trapezoid have to be on the same line. 
Moreover we must have $a=c$. So actually the potential added in \eqref{defHlin} to the Hamiltonian of the hydrogen atom is strictly negative with exception only the case that the position of the electron coincides with the position of the nucleus, namely when $x=0$. This completes the proof of Lemma \ref{lemmatrapezoid}.
\end{proof}

\subsection{Hamiltonian for a general molecule}

 We consider a molecule with nuclei in the positions $y_1,...,y_M \in \R^3$, with atomic numbers $Z_1,...,Z_M$ and with $N$ electrons. Due to neutrality we impose 
 \begin{equation}\label{neutral}
 N=\sum_{j=1}^M Z_j.
 \end{equation}
 We work with the Born-Oppenheimer approximation and we assume without loss of generality
 \begin{equation}\label{centermass0}
 \sum_{j=1}^M Z_j y_j=0.
 \end{equation}
 After the same change of units as in the case of one hydrogen atom we find that
 the Hamiltonian of the molecule without the plate is given  by 
 \begin{equation}\label{HNdef}
 H_N=\sum_{i=1}^N \left( -\Delta_{x_i} - \sum_{k=1}^M \frac{Z_k}{|x_i-y_k|}\right) + \sum_{1 \leq i < j \leq N} \frac{1}{|x_i-x_j|}.
 \end{equation}  
 Note that the sum of the $\sum_{1 \leq k< l \leq M} \frac{Z_k Z_l}{|y_k-y_l|}$ of the repulsion terms between the nuclei, which is normally present, is omitted here. The reason is that if we omit it, the interaction energy of the system, defined below in \eqref{interen}, does not change.  For the proof of our results this omission turns out to be convenient. 
 If a dielectric plate is placed vertically to a unit vector $v$ and passes through $rv$ then the Hamiltonian of the full system can be derived in a similar way as in the case of a hydrogen atom, involving now the interaction terms between electrons-mirror electrons, electrons-mirror nuclei, nuclei-mirror electrons, nuclei-mirror nuclei, see 
 Appendix \ref{spiegelladung}. 
 Here $v$ cannot be chosen to be $e_1$ because the molecule is not rotationally symmetric. Of course we have to impose that all nuclei are on the same side of the plate which mathematically means that
 \begin{equation}\label{kernbed}
 y_k \cdot v <r, \qquad \forall k \in \{1,\dots,M\}.
 \end{equation}

 The Hamiltonian of the system can be derived with similar arguments as in the derivation of \eqref{defHlin} and is given by 
 \begin{equation}\label{Hdec}
 H=H(r,v)=H_N+\halb I,
 \end{equation}
 where $H_N$ is given by \eqref{HNdef} and
 \begin{equation}\label{Idec}
 I=I_1 - I_2 - I_3,
 \end{equation}
 with
 \begin{align}\label{eqI1}
 I_1=\sum_{i=1}^N \sum_{l=1}^M   \frac{2 Z_l}{|-x_i+2r v +y_l^*|},\\ \label{eqI2} I_2=\sum_{i=1}^N \sum_{j=1}^N \frac{1}{|-x_i + 2r v + x_j^*|}, \\ \label{eqI3} 
  I_3= \sum_{k=1}^M \sum_{l=1}^M  \frac{Z_k Z_l}{|-y_k + 2rv + y_l^*|}.
 \end{align}
 Here $^*$ stands for reflection with respect to the plane which is orthogonal to the vector $v$ and passes through 0.
 The term $I_1$ consists of the repulsion terms of the electrons from the mirror nuclei, and the nuclei from the mirror electrons, which, due to symmetry, turn out to be in pairs the same and this is why we multiply with 2, similarly as in the case of a hydrogen atom. 
  $I_2$ consists of the attraction terms of the electrons from the mirror electrons and $I_3$ of the attraction terms of the nuclei from the mirror nuclei.   We conjecture that we always have $I \leq 0$ and $I<0$ almost everywhere, but we could prove this only in the special case of a hydrogen atom, see Lemma \ref{lemmatrapezoid} above.

 \subsection{Hilbert space, boundedness from below and self-adjointness of the Hamiltonian}

For a general  molecule each electron is located on the half space
\begin{equation}\label{def:R3rv}
\R^3_{r,v}=\{x \in \R^3: x \cdot v \leq r\},
\end{equation}
where the restriction imposes, similarly to \eqref{kernbed}, that the electrons are all on the same side of the plate. 
  Due to the fermionic nature of the electrons, the Hamiltonian $H=H(r,v)$ defined in \eqref{Hdec} acts in the Hilbert space
  \begin{equation*}
  \mathcal{H}=L^2_a\left((\R^3_{r,v}\times\{\pm1/2\})^N,\C\right)\simeq \bigwedge_1^NL^2\left(\R^3_{r,v}\times\{\pm1/2\},\C\right)
  \simeq \bigwedge_1^NL^2\left(\R^3_{r,v},\C^2\right)
  \end{equation*}
  of antisymmetric square-integrable wave functions 
  $\Psi(x_1,s_1,\dots ,x_N,s_N)$ with spin, that is, such that
  \begin{equation}\label{eq:fermions}
  \Psi(X_{\pi(1)},\dots ,X_{\pi(N)})=(-1)^\pi\,\Psi(X_{1},\dots ,X_{N}) 
  \end{equation}
  for any permutation $\pi\in S_N$, where $X=(x,s)\in \R^3\times\{\pm1/2\}$. As  we previously explained, when we discussed the special case of the hydrogen atom, $H(r,v)$ is not bounded from below if we do not impose any boundary conditions. 
  To achieve that $H(r,v)$ is bounded from below we impose Dirichlet boundary  condition  (see for example \cite{cornumartin}), so we choose as form domain of $H(r,v)$ the Sobolev space 
  \begin{equation}\label{formdomain}
   \mathcal{H}_{0,N}= L^2_a\left((\R^3_{r,v}\times\{\pm1/2\})^N,\C\right) \cap H^1_0\left((\R^3_{r,v}\times\{\pm1/2\})^N,\C\right). 
  \end{equation}
  The meaning of the Dirichlet boundary condition in the choice of the space is that the electrons can not pass through the plate or touch the plate.  The ground state energy $E(r,v)$ of the system is then defined by 
  \be 
  \label {defE}
  E(r,v)=\inf \sigma(H(r,v))
  =\inf_{\psi\in \mathcal{H}_{0,N}, \| \psi \|_{ L^2}=1} \la\psi\mid H(r,v)\psi\ra.
  \ee 
  We prove in Section \ref{infimumnichtunendlich} that $E(r,v)$ is  well-defined and bigger than $-\infty$ and that $H(r,v)$ can be realized as a self-adjoint operator with form domain $\mathcal{H}_{0,N}$ and operator domain $\mathcal{H}_{0,N} \cap H^2\left((\R^3_{r,v}\times\{\pm1/2\})^N,\C\right)$.
\smallskip

We note that our result turns out not to depend on the statistics of the particles, nor on the presence of the spin, but we consider this case for obvious physical reasons.

\subsection{A necessary binding Condition}

  When studying van der Waals forces between atoms or molecules, one has to assume that the
electrons prefer to distribute themselves in a locally neutral way,  see e.g. Equation (E) in the introduction of \cite{anapolitanossigal}. In other words when the nuclei are far from each other, the electrons prefer to distribute themselves so that we have a system of (neutral) atoms or molecules and not a system of ions. This assumption is physically obvious but from a mathematical point of view a famous open problem. It turns out to be a necessary condition for the van der Waals law, since oppositely charged ions attract with Coulomb interaction. 

For our results we need a binding Condition which is roughly saying that when the nuclei are far from the plate the electrons prefer to stay close to the nuclei rather than moving close to the plate.
 The precise formulation is given below in \eqref{molvermutung}. As we explain below it is physically expected that \eqref{molvermutung} holds and it is a necessary Condition for our result. 
  To motivate \eqref{molvermutung} we will first explain some properties of a system of electrons with the plate.

Let 
\begin{equation}\label{Heminus}
H_{e^-}=-\Delta_{z}-\frac{1}{2|z-z^*|}= -\Delta_{z}-\frac{1}{4z_1}
\end{equation}
 be the Hamiltonian of the system consisting of the plate and just one electron, where $z:=x-re_1$, in the perfectly conducting case ($\beta=1$). The operator $H_{e^-}$ acts on $H_0^1(\R_+ \times \R^2)$. Then
\be\label{Eenegativ}
E_{e^-}:=\underset{\psi\in H_0^1(\R_+ \times \R^2), \| \psi \|_{ L^{2}(\R_+ \times \R^2)}=1}{\inf} \la\psi\mid H_{e^-}\psi\ra<0
\ee 
because for $\psi_\alpha(y)=\alpha^{\frac{3}{2}}\psi(\alpha y)$ with $\alpha>0$:
\be\begin{split}
	\label{Skalierung}
	\la\psi_\alpha|H_{e^-}\psi_\alpha\ra&=\la\psi_\alpha|\Big(-\Delta_z-\frac{1}{2|z-z^*|}\Big)\psi_\alpha\ra\\
	&=\alpha^2\la\psi|-\Delta_z\psi\ra-\alpha\underbrace{\int \frac{|\psi(z)|^2}{2|z-z^*|}dz}_{\substack{>0}}\\
	&<0 \hspace{2mm}\mbox{ for } \alpha \mbox{ small enough.}
\end{split}
\ee
We also have $E_{e^-}>-\infty$, see Section \ref{abschaetzungfuerEe}. Note that using \eqref{Skalierung} for $\alpha=\beta \in (0,1]$ we find that
\be
\label{Eescaling}
\underset{\psi\in H_0^1(\R_+ \times \R^2), \| \psi \|_{ L^{2}(\R_+ \times \R^2)}=1}{\inf} \la\psi\mid |\Big(-\Delta_z-\frac{\beta}{2|z-z^*|}\Big)\psi\ra = \beta^2 E_{e^-}. 
\ee

For all $k \in \{1,...,N\}$ we define the Hamiltonian of $k$ electrons with the plate by
\begin{align}\label{Ak}
A_k&=-\sum_{i=1}^k \Delta_{x_i}  + 	V_1(x),
\end{align}
where 
	\begin{align}\label{V1x}
	V_1(x)= \sum_{1\leq i < j \leq k} \frac{1}{|x_i-x_j|} -\halb \sum_{i=1}^k \sum_{j=1}^k \frac{1}{|-x_i + 2 r v  +x_j^*|},
	\end{align}
	is the potential of  $k$ electrons interacting with each other, with their own mirror charges, and with the mirror charges of the other electrons. As a form domain we consider the space $\mathcal{H}_{0,k}$ defined in \eqref{formdomain}.
 Let $Q_m$ denote the orthogonal projection onto the antisymmetric functions of $m$ particles with respect to exchanges of position-spin pairs.
More explicitly,
\be\label{def:Qm}
Q_m \Phi(X_1,\dots,X_m)= \frac{1}{m!} \sum_{\pi \in S_m} (-1)^{\pi} \Phi(X_{\pi(1)}, \dots, X_{\pi(m)}).
\ee   

 We now prove with the help of \eqref{Eescaling} the following
\begin{Lemma}\label{lem:k electrons plate}
	The spectrum of $A_k$ is given by $\sigma(A_k)= \sigma(A_k Q_k) = [k \beta^2 E_{e^-},\infty)$.
\end{Lemma}

\begin{proof}
 Let 
	\begin{align*}
	V_2(x)=  -\halb \sum_{i=1}^k \frac{1}{|-x_i + 2 r v + x_i^*|},
	\end{align*}
	be the potential of  $k$ electrons interacting only with their  own mirror charges. 
	Then since $\beta \in (0,1]$, using \eqref{V1x} we find
	\begin{align*}
	V_1(x)-V_2(x) 
	\geq \halb \sum_{i\neq j } 
	\left(  
	\frac{1}{|x_i-x_j|}-  \frac{1}{|-x_i + 2 r v  +x_j^*|}
	\right)\, .
	\end{align*}
	For each pair of electrons $x_i, x_j$ are in the same half--space. As a consequence,
	the distance $|x_i-x_j|$ from $x_i$ to $ x_j$ is clearly 
	smaller than the distance $|-x_i + 2 r v  +x_j^*|$ from  
	$x_i$ to the mirror of $ x_j$. 
	Thus $ V_1(x)-V_2(x)\ge 0 $, which together with \eqref{Ak}  gives
	\begin{align*}
	A_k \ge \widetilde{A}_k:= -\sum_{i=1}^k \Delta_{x_i} +  V_2(x) 
	\end{align*}
	which is the Hamiltonian describing $k$ non--interacting electrons in the presence of  the plate. 
	Clearly by \eqref{Eescaling} $\sigma (\widetilde{A}_k)=\sigma (Q_k \widetilde{A}_k) = [k \beta^2 E_{e^-},\infty)$, hence  
	\begin{align*}
	\sigma(A_k)\subset \sigma (\widetilde{A}_k), \quad  \sigma(Q_k A_k)\subset \sigma (Q_k \widetilde{A}_k).
	\end{align*}
	On the other hand, placing $k$ electrons far away from each other, shows that we also have the reverse inclusion
	\begin{align*}
	\sigma(A_k)\supset \sigma (\widetilde{A}_k), \quad  \sigma(Q_k A_k)\supset \sigma (Q_k \widetilde{A}_k) 
	\end{align*}
	which proves Lemma \ref{lem:k electrons plate}. 
\end{proof}

A binding condition, which    guarantees the existence of a ground state of the molecule-plate system if the distance $r$ is not too small, is 
\begin{align} \label{molvermutung}
\inf \sigma(Q_N H_N)< & \inf\sigma(Q_{N-k} H_{N-k}) 
+ k \beta^2 E_{e^-} , \quad \forall k \in \{1,\dots, N\}.
\end{align}
In light of Lemma \ref{lem:k electrons plate}
the physical meaning of condition   \eqref{molvermutung} is that when the plate and the nuclei of the molecule are far from each other, it is energetically favorable for the electrons to be close to the nuclei rather than close to the plate. In Section \ref{abschaetzungfuerEe} we prove the validity of \eqref{molvermutung} for the cases of a hydrogen atom and of a  helium atom. 
 We expect \eqref{molvermutung} to hold in general.
 In fact in the special case that the molecule is an atom there is a clear experimental evidence that it holds. 
 Indeed,  the difference $\inf\sigma(Q_{N-1} H_{N-1})- \inf\sigma(Q_{N} H_{N})$
 is the first ionization energy of the molecule, namely the energy needed to remove an electron from the molecule.  
 It is well known that
 \begin{equation}\label{zeta}
 \zeta(x)=\frac{1}{\sqrt{8\pi}}e^{\frac{-|x|}{2}}
 \end{equation}
  is the ground state of the hydrogen atom in $\R^3$ 
 with its nucleus at zero, \cite{landau}. 
 For a simple proof of the fact that $\zeta$ is the (up to a constant) unique ground state of the hydrogen atom we refer to the lecture notes \cite{loss} Chapter 2. It follows that the ground state energy of hydrogen is
 \begin{equation}\label{Eh}
  E_h:=\inf\sigma(-\Delta_x-\frac{1}{|x|})=-\frac{1}{4}.
 \end{equation}
  In Section  \ref{abschaetzungfuerEe} we prove that $|E_{e^-}|=\frac{|E_h|}{16}$ so $|E_{e^-}|$ is one sixteenth of the (first) ionization energy of the hydrogen atom.  
 All experimentally measured  first ionization energies of atoms are bigger than $|E_{e^-}|=\frac{|E_h|}{16}$, namely bigger than a sixteenth of the (first) ionization energy of the hydrogen atom. We refer, for example,  to \cite{MarMK}. 
   Higher ionization energies $\inf\sigma(Q_{N-k} H_{N-k})- \inf\sigma(Q_{N-k+1} H_{N-k+1})$,  $k>1$ are expected to be bigger than the first ones and every experiment confirms this. As a consequence, we expect \eqref{molvermutung} to hold.

   Note that \eqref{molvermutung} is important 
not only for the proof of the main theorem, but already for  
proving that $H(r,v)$ has a ground state, which we do for $r$ large enough. In the case of the hydrogen atom  we do this in  
Section \ref{HVZ} below, by proving a Weyl type Theorem for $H(r,v)$ which helps us prove that its ground state energy is below the bottom of its essential spectrum. 
For non-experts: the fact that it is energetically favorable for the electron to stay close to the nuclei, at least when the nuclei are not too close to the  plate, helps to show compactness of energy minimizing sequences. 
This is in contrast with the fact that the information "close to the (unbounded) plate" would not clearly ensure compactness of energy minimizing sequences.  In Section \ref{generalizationmolecule} we prove that \eqref{molvermutung} implies existence of a ground state of the system for large enough distances in the general case. The proof is done in another way, see equation \eqref{existencegroundstate} below and its proof.

\subsection{The main result}

Let 
\begin{equation}\label{ENdef}
E_N=\inf \sigma( H_N|_{\text{Ran}Q_N}),
\end{equation} 
 be the ground state energy of $H_N|_{\text{Ran}Q_N}$ when the latter acts on the whole space and not only on the half--space. 
The HVZ (see e.g. \cite{Hunziker}, \cite{vanWinter}, \cite{Zhislin}) and Zhislin-Sigalov theorems (see e.g. \cite{Zhislin}, \cite{ZhislinSigalov}) imply that $E_N$ is an eigenvalue of $ H_N|_{\text{Ran}Q_N}$, lying strictly below the essential spectrum:
\begin{equation}\label{HVZZhislin}
E_N<\min\sigma_{\rm ess}\big( H_N|_{\text{Ran}Q_N}\big).
\end{equation}
The ground states of the Hamiltonian $H_N$ are exponentially decaying, (see e.g. \cite{CombesThomas}, \cite{griesemer}) namely there exists $c>0$ such that
\begin{equation}\label{eq:expdecay}
H_N \Phi= E_N \Phi \implies \|e^{c|x|} \partial^\alpha\Phi\|_{L^2} < \infty, \quad |\alpha|\leq 2.
\end{equation}
Assuming \eqref{molvermutung}   the interaction energy of the system of the molecule with the plate is defined by
\begin{equation}\label{interen}
W(r,v)=E(r,v)- E_N.
\end{equation}
If the interaction energy is negative, i.e.,  $E(r,v)<E_N$, 
then it is  attractive, because separating the molecule  from the plate costs energy. 
Similarly $E(r,v)> E_N$ implies a positive interaction energy and it is repulsive. While we are not able to estimate the interaction energy for small distances, we are able to study its long range asymptotics. 
Let 
\begin{equation}\label{def:B}
B:=\{\Psi \in L^2(\R^{3N}): \Psi \text{ is a ground state of } H_N|_{\text{Ran}Q_N}\}
\end{equation}
We define
\begin{equation}\label{def:Cv}
C(v)= \frac{1}{16} \sup_{\psi \in B, \|\psi\|=1}\left\langle \psi, \left(\left(\sum_{i=1}^N x_i \cdot v\right)^2 +\left|\sum_{i=1}^N x_i\right|^2\right)\psi\right\rangle.
\end{equation}
and 
\begin{equation}\label{def:Dv}
D= \frac{1}{4} \sup_{\psi \in M}\left\langle \psi,  \left(\sum_{i=1}^N x_i^4\right) \psi\right\rangle,
\end{equation}
where $M$ is the set of maximizers of the right hand side of \eqref{def:Cv}. For each ground state $\psi \in B$ we define the one electron density 
\begin{equation}
\rho_\psi(x)=\int |\psi(x,x_2,\dots,x_N)|^2 dx_2 \dots dx_N.
\end{equation}
Recall that $\beta$ is given by \eqref{dielectriccon}.
We are now ready to state our main result. 
\begin{Theorem}\label{satzallg}
(a)	Under the binding condition \eqref{molvermutung}, 
	\begin{equation}\label{eqn:mainthm}
	W(r,v)=-\beta\frac{C(v)}{r^3} +   \mathcal{O}\left(\frac{1}{r^4}\right), \quad \text{as } r\to\infty, 
	\end{equation}
	where $C(v)$ is given in  \eqref{def:Cv}.

(b) If moreover the molecule is symmetric with respect to the map $x \to -x$ and so is the one electron density  $\rho_\psi$ for all $\psi \in B$, then we have the more precise expansion 
	\begin{equation}\label{eqn:cor1}
	W(r,v)=-\beta\frac{C(v)}{r^3} +   \mathcal{O}\left(\frac{1}{r^5}\right), \quad \text{as } r\to\infty. 
	\end{equation}
	In the special case that the molecule is an atom and the one electron density $\rho_\psi$ is spherically symmetric for all $\psi \in B$, we have the more precise expansion
	 	\begin{equation}\label{eqn:cor2}
	 	W(r)=-\beta\frac{C}{r^3} - \beta\frac{D}{r^5} + \mathcal{O}\left(\frac{1}{r^6}\right), \quad \text{as } r\to\infty, 
	 	\end{equation}
	 	where we have omitted the orientation $v$, because there is no dependence on it. 
\end{Theorem}
\begin{remark}
	As it is obvious from the definition \eqref{def:Cv},   the constant $C(v)$ is positive for all $v$. Thus  \eqref{eqn:mainthm} implies attraction independently of the orientation and of the form of the ground states of the molecule.  Theorem \ref{satzallg} is the first rigorous result providing the leading term of a van der Waals interaction without any restrictions on the multiplicity of the ground state energy. This is something that does not rely on improvements of the existing methods of previous works but on the fact that the mirror charges are strictly correlated to the true ones unlike the case of interaction between atoms or molecules. 
\end{remark}
\begin{remark}
	In \cite{anapolitanoslewin} the van der Waals interaction between molecules was studied. The term of the order $\frac{1}{r^3}$ is the leading term when the molecules have dipole moments. Other than in Theorem \ref{satzallg}, whether it is attracting or repulsive depends on the orientations of the molecules. This is because the dipole-dipole interaction between molecules comes from  permanent dipole moments of the molecules and the orientations could be so that the molecules repel each other. 
\end{remark}
\begin{remark}
	Equation \eqref{eqn:cor1} can be of interest because there are a lot of molecules having the respective symmetry, including e.g. molecules consisting of two atoms of the same kind. 
\end{remark}

 We will start proving Theorem \ref{satzallg} for the special case of a hydrogen atom. In this case
our first main result is  more precise and moreover \eqref{molvermutung} does not need to be assumed. We also omit the unit vector $v$ as in this case the interaction energy does not depend on it. 
\begin{Theorem}
	\label{satz} Assume that $N=1$, $M=1$ (hydrogen atom). 
	There are $r_0>0$ and $D_1,D_2,D_3>0$, so that for all $r>r_0$
	$$ -\frac{D_3}{r^6} \leq W(r)+\frac{\beta}{r^3} +\frac{18\beta}{r^5} \leq D_1 e^{-D_2 r}\, .$$
\end{Theorem}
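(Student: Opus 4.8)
\emph{Overall approach.} I would combine a multipole (Taylor) expansion of the image--charge potential $I_r$ with a rank--one Feshbach--Schur reduction onto the hydrogen ground state. The multipole expansion gives the $-r^{-3}-18r^{-5}$ to first order, and the Feshbach second--order term, which is $\Theta(r^{-6})$, both produces the $r^{-6}$ lower bound and, by dominating the $O(r^{-7})$ first--order remainder, produces the exponentially small upper bound.

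\emph{Step 1: multipole expansion of the added potential.} Since $|2re_1+(x-x^*)|=2(r+x_1)$, the potential added in \eqref{defHlin} is $I_r(x):=\tfrac12\bigl(-\tfrac1{2r}-\tfrac1{2(r+x_1)}+\tfrac2{\sqrt{(2r+x_1)^2+x_2^2+x_3^2}}\bigr)=r^{-1}g(x/r)$ with $g$ real--analytic near $0$, $g(0)=0$, $\nabla g(0)=0$; computing the first homogeneous parts,
\begin{gather*}
 I_r(x)=-\frac{x_1^2+|x|^2}{16r^3}+\frac{3x_1(2x_1^2+x_2^2+x_3^2)}{32r^4}\\
 +\frac{-56x_1^4-24x_1^2(x_2^2+x_3^2)+3(x_2^2+x_3^2)^2}{256r^5}+O\!\bigl(|x|^5r^{-6}\bigr)\quad\text{for }|x|\le r/2 .
\end{gather*}
Let $H_h:=-\Delta-1/|x|$, $\psi_h(x)=(8\pi)^{-1/2}e^{-|x|/2}$ ($E_h=-1/4$), let $\chi_r$ be a smooth cutoff equal to $1$ for $x_1\ge -r+2$ and $0$ for $x_1\le -r+1$, and $\tilde\psi_h:=\chi_r\psi_h/\|\chi_r\psi_h\|\in H_0^1(\R^3_r)$. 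Since $g$ is even in $x_2$ and in $x_3$ and $\psi_h$ is spherically symmetric, all odd--degree homogeneous parts of $g$ have zero $\psi_h$--expectation, so the $r^{-4}$ term and the \emph{entire} $r^{-6}$ order drop out; with the moments $\langle|x|^2\rangle=12$, $\langle|x|^4\rangle=360$ one gets
\[
 \langle\tilde\psi_h,I_r\tilde\psi_h\rangle=-\frac1{r^3}-\frac{18}{r^5}+S(r),\qquad |S(r)|\le C r^{-7},
\]
the cutoff error and the contribution of $\{|x|>r/2\}$ being $O(e^{-cr})$ (the cutoff is essential: $\psi_h$ does not vanish at the plate $x_1=-r$, where $I_r$ is singular). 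One also records $\|I_r\tilde\psi_h\|^2=c_0 r^{-6}+O(r^{-7})$ with $c_0=\langle(x_1^2+|x|^2)^2\rangle/256=21/8>1$, and $\|\nabla(I_r\tilde\psi_h)\|=O(r^{-3})$.

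\emph{Step 2: Feshbach--Schur reduction.} Put $P=|\tilde\psi_h\rangle\langle\tilde\psi_h|$, $\bar P=1-P$. By the HVZ--type theorem of Section~\ref{preparations} together with the energy inequality \eqref{vermutung}, $\inf\sigma\bigl(\bar P H(r)\bar P\big|_{\operatorname{ran}\bar P}\bigr)\ge\min\{E_h^{(2)},E_{e^-}\}-o(1)>E_h\ge E(r)-o(1)$ (with $E_h^{(2)}=-1/16$), hence a spectral gap $g_r\ge g_*>0$ uniform in large $r$; so $\bar P(H(r)-E(r))\bar P$ is boundedly invertible on $\operatorname{ran}\bar P$ and the Feshbach--Schur isospectrality principle at $\lambda=E(r)$ gives the exact identity
\[
 E(r)=\langle\tilde\psi_h,H(r)\tilde\psi_h\rangle-R_2(r),\qquad R_2(r)=\bigl\langle \bar P H(r)\tilde\psi_h,\ \bigl(\bar P(H(r)-E(r))\bar P\bigr)^{-1}\bar P H(r)\tilde\psi_h\bigr\rangle\ge0 .
\]
Since $H_h\tilde\psi_h=E_h\tilde\psi_h+O(e^{-cr})$ in $L^2$, we have $\langle\tilde\psi_h,H(r)\tilde\psi_h\rangle=E_h+\langle\tilde\psi_h,I_r\tilde\psi_h\rangle+O(e^{-cr})$ and $\bar P H(r)\tilde\psi_h=\bar P I_r\tilde\psi_h+O(e^{-cr})$, whence
\[
 W(r)=-\frac1{r^3}-\frac{18}{r^5}+S(r)-R_2(r)+O(e^{-cr}).
\]

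\emph{Step 3: two--sided control of $R_2(r)$ and conclusion.} From $g_r\ge g_*$, $0\le R_2(r)\le g_*^{-1}\|\bar P I_r\tilde\psi_h\|^2+O(e^{-cr})=O(r^{-6})$, which already gives $W(r)+r^{-3}+18r^{-5}=S(r)-R_2(r)+O(e^{-cr})\ge -Cr^{-7}-Cr^{-6}-O(e^{-cr})\ge -D_3 r^{-6}$. For the upper bound we must see $R_2(r)$ beats $S(r)$: using the variational formula $\langle w,A^{-1}w\rangle\ge|\langle w,\phi\rangle|^2/\langle\phi,A\phi\rangle$ with $A=\bar P(H(r)-E(r))\bar P$, $w=\bar P H(r)\tilde\psi_h$ and test vector $\phi=\bar P I_r\tilde\psi_h$, together with $\langle\phi,(H(r)-E(r))\phi\rangle=O(r^{-6})$ (controlled by $\|\nabla\phi\|^2=O(r^{-6})$, Hardy's inequality for the half--space, and $\|\phi\|=O(r^{-3})$) and $\|\phi\|^2=(c_0-1)r^{-6}+O(r^{-7})=\tfrac{13}{8}r^{-6}+O(r^{-7})$, one gets $R_2(r)\ge c'' r^{-6}$ for some $c''>0$. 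Since $c''r^{-6}$ exceeds $|S(r)|\le Cr^{-7}$ for $r$ large, $S(r)-R_2(r)<0$, so $W(r)+r^{-3}+18r^{-5}\le O(e^{-cr})\le D_1e^{-D_2r}$, completing the theorem.

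\emph{Main obstacle.} The crux is the uniform gap $g_r\ge g_*>0$: it depends on the HVZ--type theorem for $H(r)$ (placing $\inf\sigma_{\mathrm{ess}}(H(r))$ near $E_{e^-}$) and on the strict inequality $E_h<E_{e^-}$ of \eqref{vermutung}, plus an IMS localization excluding low--lying excited states of $\bar P H(r)\bar P$ as $r\to\infty$; and because $I_r$ is only form--small and genuinely singular at the plate $x_1=-r$, every estimate near the plate must be bought with the exponential decay of $\psi_h$. The rest is bookkeeping: that the multipole remainder is truly $O(r^{-7})$ (parity annihilates the $r^{-6}$ order), that $R_2(r)=\Theta(r^{-6})$ is sandwiched from both sides, and that all cutoff errors are $O(e^{-cr})$ — which is exactly why the upper bound is exponentially, not merely polynomially, small.
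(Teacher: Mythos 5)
Your proposal is correct and follows the same Feshbach--Schur reduction as the paper, but you arrive at the exponentially small upper bound by a genuinely different route. The paper's argument for the upper bound is to combine the trivial positivity $R_2(r)\ge 0$ with an \emph{exact} sign on the first--order term: by Newton's theorem the $\tfrac{2}{|2re_1+x|}$ piece contributes exactly $\tfrac1r$ against the spherically symmetric $|\psi|^2$, so $\langle\psi,\tfrac{I}{2}\psi\rangle$ reduces to $-\tfrac1{4r}\langle\psi,(\tfrac{x_1^2}{r^2}+\tfrac{x_1^4}{r^4})\psi\rangle$ plus a remainder $-\tfrac1{4r}\langle\psi,\tfrac{x_1^6}{r^6}\tfrac{1}{1+x_1/r}\psi\rangle$ that is manifestly $\le 0$ on $\mathrm{supp}\,\psi$ (since $x_1^6\ge 0$ and $1+x_1/r>0$ there). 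This gives $\langle\psi,\tfrac{I}{2}\psi\rangle\le-\tfrac1{r^3}-\tfrac{18}{r^5}+\mathcal O(e^{-cr})$ with no need to say anything about the size of $R_2$ from below. You instead accept only the symmetric estimate $|S(r)|\le Cr^{-7}$ from the multipole expansion and compensate by proving $R_2(r)\ge c''r^{-6}$ via the variational inequality $\langle w,A^{-1}w\rangle\ge|\langle w,\phi\rangle|^2/\langle\phi,A\phi\rangle$; the crucial inputs there are $\|\phi\|^2=(c_0-1)r^{-6}+O(r^{-7})$ with $c_0-1=\tfrac{13}{8}>0$ (a second--moment inequality, so not a coincidence) and $\langle\phi,A\phi\rangle=O(r^{-6})$ (via $\|\nabla\phi\|^2=O(r^{-6})$ and Hardy), and these check out. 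Both routes deliver the theorem; the paper's is shorter once Newton's theorem and the geometric--series trick are in hand, while yours is somewhat more robust since it does not rely on the sign of the multipole remainder and is closer in spirit to what happens in the general molecular case. One small remark: by parity your $S(r)$ is actually $O(r^{-8})$, not just $O(r^{-7})$, but this does not change anything; and, as you note, if you had kept the remainder in the form the paper does you would have observed $S(r)\le\mathcal O(e^{-cr})$ directly and the lower bound on $R_2$ would have become superfluous.
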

\begin{Bemerkung}
	It is worth noting that the  two leading terms of the interaction energy can be calculated explicitly. This is not the case, even for the leading order term, in the usual van der Waals law for a  system consisting of two hydrogen atoms. This relies  on the fact, mentioned above,  that in this regime the  mirror charges are strictly correlated to the true ones. 	
	 Our theorem shows that the interaction energy of the hydrogen atom with the plate is, in leading order, given by $-\frac{\beta}{r^3} -\frac{18\beta}{r^5}$. 
\end{Bemerkung}

For the proof of Theorems \ref{satzallg}, \ref{satz},  we use, to a large extent, the methods already employed in \cite{anapolitanossigal}  to investigate the van der Waals interaction between atoms. However, this approach has to be significantly modified, which is  mostly due to the fact that the operators act on a half--space. 
 Even showing that the system has  a ground state in the case of a general molecule interacting with a plate, is an open problem. Assuming \eqref{molvermutung}, we can prove existence of a ground state,  when $r$ is large enough,  and  Theorem  \ref{satzallg}. However, we are able to prove  \eqref{molvermutung} only in the special case of a hydrogen or a helium atom. As explained above we expect \eqref{molvermutung} to hold for a general molecule and there is a clear experimental evidence that it holds for each atom. 

 That the binding Condition \eqref{molvermutung} is necessary for \eqref{eqn:mainthm} can be physically seen as follows: if \eqref{molvermutung}  does not hold then the electrons prefer to organize themselves so that some of them are close to the plate and possibly some of them close to the nuclei. But then we have interaction of the positive ion with its mirror image which gives a Coulomb attraction. There is moreover, a dipole-charge attraction of the positive ion with the dipoles of electron--mirror electrons pairs which competes with a dipole-dipole repulsion of the electron--mirror electron pairs.   In Section \ref{proofnecessity} we sketch the proof of the necessity of \eqref{molvermutung}.



\section{Basic properties of the half--space system.}\label{preparations}

In order to be able to prove Theorem \ref{satz}, we need a few important results, that we discuss in this Section.

\subsection{The electron-plate system and proof of \eqref{molvermutung} for the cases of a hydrogen and of a helium atom}\label{abschaetzungfuerEe} In this section we 
compute  $E_{e^-}$, the minimal energy of the free electron in a half--space with a perfectly conducting boundary, and prove \eqref{molvermutung} for the cases of a hydrogen atom and of a helium atom. 
We start with the following Hardy type inequality. 
\begin{Lemma}
For $u \in H_0^1(\R_+)$ we have 
\be
\label{zweiteungleichung}
\int_0^\infty\frac{|u(y)|^2}{4y^2}dy \leq\int_0^\infty|u'(y)|^2dy.
\ee
\end{Lemma}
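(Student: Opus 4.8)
The plan is to prove this sharp one–dimensional Hardy inequality by the classical ``completing the square'' trick, carried out first for smooth functions with compact support in the open half--line and then extended to all of $H_0^1(\R_+)$. The first step is to reduce to the case $u \in C_c^\infty((0,\infty))$, which is a dense subspace of $H_0^1(\R_+)$. Assuming \eqref{zweiteungleichung} for every such $u$, let $u \in H_0^1(\R_+)$ be arbitrary and pick $u_n \in C_c^\infty((0,\infty))$ with $u_n \to u$ in $H^1(\R_+)$. Then $\int_0^\infty |u_n'(y)|^2\, dy \to \int_0^\infty |u'(y)|^2\, dy$, and after passing to a subsequence along which $u_n \to u$ pointwise almost everywhere, Fatou's lemma gives
\[
\int_0^\infty \frac{|u(y)|^2}{4y^2}\, dy \leq \liminf_{n\to\infty}\int_0^\infty \frac{|u_n(y)|^2}{4y^2}\, dy \leq \liminf_{n\to\infty}\int_0^\infty |u_n'(y)|^2\, dy = \int_0^\infty |u'(y)|^2\, dy ,
\]
which is the claim.

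For $u \in C_c^\infty((0,\infty))$ all integrals below are finite and the support of $u$ is a compact subinterval of $(0,\infty)$, so no integrability or boundary issues arise. I would start from the trivial estimate
\[
0 \leq \int_0^\infty \Bigl| u'(y) - \frac{u(y)}{2y}\Bigr|^2 dy = \int_0^\infty |u'(y)|^2\, dy - \mathrm{Re}\int_0^\infty \frac{\overline{u(y)}\,u'(y)}{y}\, dy + \frac14\int_0^\infty \frac{|u(y)|^2}{y^2}\, dy .
\]
For the middle term use $2\,\mathrm{Re}\bigl(\overline{u}\,u'\bigr) = (|u|^2)'$ and integrate by parts against $1/y$; the boundary terms vanish since $u$ is supported away from $0$ and $\infty$, so $\int_0^\infty \frac{(|u(y)|^2)'}{y}\, dy = \int_0^\infty \frac{|u(y)|^2}{y^2}\, dy$, i.e.\ the middle term equals $\tfrac12\int_0^\infty \frac{|u(y)|^2}{y^2}\, dy$. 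Substituting back, the identity becomes $0 \leq \int_0^\infty |u'(y)|^2\, dy - \tfrac14 \int_0^\infty \frac{|u(y)|^2}{y^2}\, dy$, which is exactly \eqref{zweiteungleichung}.

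This argument is essentially routine; the only point asking for mild care is the density step, where one must check that the left--hand side behaves well under approximation — this is why Fatou's lemma (together with a.e.\ convergence along a subsequence, available because $H^1$-convergence on $\R_+$ implies uniform convergence on every bounded interval) is used rather than a naive limit. Alternatively one can bypass density altogether: every $u \in H_0^1(\R_+)$ is absolutely continuous with $u(0)=0$ and $|u(y)| \leq \sqrt{y}\,\|u'\|_{L^2(\R_+)}$, hence $|u(y)|^2/y \to 0$ both as $y \to 0^+$ and as $y \to \infty$, which legitimizes the integration by parts directly for such $u$; but the density route is shorter. (One could equally replace the completing-the-square computation by writing $y^{-2} = -(y^{-1})'$, integrating by parts, and applying Cauchy--Schwarz to the resulting cross term, which yields the same constant $1/4$.)
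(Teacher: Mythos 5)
Your proof is correct, and it takes a genuinely different — though closely related — route from the paper's. The paper starts from the left--hand side $\int_0^\infty |u|^2/(4y^2)\,dy$, integrates by parts once via $y^{-2}=-(y^{-1})'$ to produce the cross term $\mathrm{Re}\int_0^\infty \overline{u}\,u'/(2y)\,dy$, and then applies Cauchy--Schwarz and cancels a factor of $\bigl(\int |u|^2/(4y^2)\bigr)^{1/2}$ on both sides. You instead complete the square on $\int_0^\infty |u'-u/(2y)|^2\,dy \ge 0$ and integrate by parts on the resulting cross term. These are the two classical proofs of the sharp $1$--D Hardy inequality, and as you observe in your closing parenthetical, they are essentially dual to one another. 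Your version buys two minor things: it avoids the division step (so there is no need to argue that $\int |u|^2/(4y^2)$ is finite, although that is automatic for $u\in C_c^\infty((0,\infty))$), and it makes visible the would-be equality case $u'=u/(2y)$, i.e.\ $u(y)=c\sqrt{y}\notin H_0^1(\R_+)$, so the inequality is strict for nonzero $u$. You also supply more detail on the density step — the paper simply invokes "a density argument," whereas your Fatou's lemma argument (or the alternative via absolute continuity and $|u(y)|\le\sqrt{y}\,\|u'\|_{L^2}$) correctly handles the fact that the left--hand side is only lower semicontinuous, not continuous, under $H^1$--convergence.
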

\begin{proof}
	The proof is known but we shall repeat it for convenience of the reader. 
 By a density argument it is enough to prove
 the inequality \eqref{zweiteungleichung} for all $u \in C_c^\infty(\R_+)$. Indeed, if $u \in C_c^\infty(\R_+)$, then we have that
  \begin{align*}
 	\int_0^\infty\frac{|u(y)|^2}{4y^2}dy&=-\int_0^\infty\frac{\overline{u(y)}u(y)}{4}\bigg(\frac{1}{y}\bigg)'dy=Re\int_0^\infty\frac{\overline{u(y)}}{2y}u'(y)dy\\
 	 \overset{Cauchy-Schwarz}{\leq} & \bigg(\int_0^\infty \frac{|u(y)|^2}{4y^2} dy\bigg)^\frac{1}{2}\bigg(\int_0^\infty|u'(y)|^2dy\bigg)^\frac{1}{2},
 \end{align*}
 from which \eqref{zweiteungleichung} immediately follows.
\end{proof}
We are next going to prove 
\begin{equation}\label{EeminusgleichEh4}
E_{e^-}=\frac{E_h}{16},
\end{equation}
  where recall that $E_h$, given in \eqref{Eh},
 is the ground state energy of the hydrogen atom.  This implies \eqref{molvermutung} in the case of hydrogen atom for any $\beta \in (0,1]$ since $E_h<0$. 
Split $\R_+^3:=\R_+ \times \R^2$, then 

\begin{align}\nonumber
	E_{e^-}&=\underset{\psi\in H_0^1(\R_+^3), \| \psi \|_{ L^{2}(\R_+^3)}=1}{\inf} \la\psi\mid H_{e^-}\psi\ra\\ \nonumber
	&=\underset{\psi\in H_0^1(\R_+^3), \| \psi \|_{ L^{2}(\R_+^3)}=1}{\inf} \la\psi\mid\Big(-\Delta_z-\frac{1}{2|z-z^*|}\Big) \psi\ra\\ \label{Eeminusgeq}
	& = \underset{\psi\in H_0^1(\R_+^3), \| \psi \|_{ L^{2}(\R_+^3)}=1}{\inf} \la\psi\mid\Big(-\frac{d^2}{dz_1^2}-\frac{1}{4z_1}\Big) \psi\ra,
\end{align}
where the last equality holds due to separation of variables and the known fact that $\sigma(-\frac{d^2}{dz_2^2}-\frac{d^2}{dz_3^2}) =[0,\infty)$. 
Observe now that  
\begin{equation}\label{Hegroundstate}
\frac{z_1 e^{-z_1/8}}{8\sqrt{2}}
\end{equation}
 is a positive eigenfunction of $-\frac{d^2}{dz_1^2}-\frac{1}{4z_1}$ and therefore by Perron-Frobenius theory (see e.g. \cite{Reed4}  Chapter XIII Section 12) it has to be its ground state. It follows that
 \begin{equation}\label{Eeequality}
  E_{e^-}=-\frac{1}{64},
  \end{equation}
  which together with  \eqref{Eh} implies \eqref{EeminusgleichEh4}. 

\medskip

From \eqref{Eh} and \eqref{Eeequality} we find that \eqref{molvermutung} holds in the case of the hydrogen atom.  We will now prove \eqref{molvermutung} for the case of a helium atom.

\begin{Proposition}[\eqref{molvermutung} for a helium atom]
	If $N=2$ and $M=1$ (two electrons and one nucleus), then \eqref{molvermutung} holds.
\end{Proposition}
\begin{proof}
	We may assume without loss of generality that $\beta=1$. 
	Note that because of the presence of spin, the spacial part of the two--electron wave function can be symmetric. 
	Thus we can use the tensor product of the rescaled hydrogen ground state 
	$\phi(x):=2^\frac{3}{2}\zeta(2x)$ with itself  as a test function for the ground state energy of helium. 
	Using Newton's Theorem, see Section 9.7 in \cite{newton}, 
	an elementary but lengthy computation gives 
	\begin{equation*}
	\la \phi \otimes \phi, H_2 \phi \otimes \phi\ra = 5.5 E_h,
	\end{equation*}
	therefore, 
	\begin{equation}\label{E2upper}
	\inf \sigma(H_2) \leq 5.5 E_h.
	\end{equation}
	Furthermore, by the rescaling argument in \eqref{res1} and \eqref{res2},  we have
	$\inf \sigma(H_1) = 4 E_h $. As a consequence, the binding condition \eqref{molvermutung} is clearly satisfied for $k=1,2$, i.e., 
	for the helium--plate system.
\end{proof}

\subsection{Boundedness from below of $H(r,v)$ and realization as a self-adjoint operator}
\label{infimumnichtunendlich}
Arguing similarly as in the proof of \eqref{zweiteungleichung},
one can prove that
\begin{align} \nonumber
& \iiint_{\R_+^3} \frac{|u(z_1,z_2,z_3)|^2}{4z_1^2}dz_1 dz_2 dz_3 \leq \\ \label{hardytype}  \iiint_{\R_+^3}& \left|\frac{\partial{u(z_1,z_2,z_3)}}{\partial z_1}\right|^2 dz_1 dz_2 dz_3, \text{ } \forall u \in H_0^1(\R_+^3).
\end{align}
Recall that  $\R^3_{r,v}$ was defined in \eqref{def:R3rv}.
Since for all $x=(x_1, x_2, x_3) \in \R^3_{r,v}$
\be \nonumber
\frac{1}{\vert 2re_1-(x-x^*)\vert}=\frac{1}{2\vert r-x_1\vert}=2\frac{1}{2}\frac{1}{2\vert r-x_1\vert}\leq\frac{1}{4\varepsilon}+\frac{\varepsilon}{4\vert r-x_1\vert^2}, \quad \forall \varepsilon > 0
\ee
using \eqref{hardytype} we find that in $H_0^1(\R^3_{r,v})$
\be \nonumber
\frac{1}{2\vert 2re_1-(x-x^*)\vert}\leq \frac{1}{8 \varepsilon}-\frac{\varepsilon}{2}\frac{\partial^2}{\partial x_1^2}  \leq \frac{1}{8\varepsilon}-\frac{\varepsilon}{2} \Delta, \quad \forall \varepsilon>0.
\ee
This together with the infinitesimal boundedness of the Coulomb potential with respect to the Laplacian  implies that $H(r,v)$ is bounded from below in the case of a hydrogen atom. 


We will explain how to prove that $H(r,v)$ is self--adjoint with domain $\mathcal{H}_{0,N} \cap H^2\left((\R^3_{r,v}\times\{\pm1/2\})^N,\C\right)  $,
where recall that $\mathcal{H}_{0,N}$ was defined in \eqref{formdomain}.
We start with the special case of a hydrogen atom.  Since the only difficulty arises from the attraction of the electron with its mirror image, we are going to prove that $H_{e-}$ defined in \eqref{Heminus} with domain $H^2(\R_+^3) \cap H_0^1(\R_+^3)$ is self-adjoint, where $\R_+^3:=\R_+ \times \R^2$.
The symmetry can be proven by integration by parts and an approximation by smooth functions. To prove self--adjointness
we first observe that since  $H_{e-}=-\Delta_z-\frac{1}{4z_1}$ from \eqref{hardytype}
we find that for all $u \in H^2(\R_+^3) \cap H_0^1(\R_+^3)$
\be
\int \frac{|u(z)|^2}{z_1^2}dz \leq \int 4 |\nabla u(z)|^2 dz = 4 \int u(z) (-\Delta u(z)) dz
$$$$ \leq 4 \|u\| \|-\Delta u\| \leq \epsilon \|-\Delta u\|^2 + \frac{4}{\epsilon} \|u\|^2.
\ee
Therefore, if we choose $\epsilon < 1$, the Kato-Rellich Theorem, see for example \cite{Reed} Theorem X.12, is applicable if we assume that $-\Delta$ is self-adjoint in $H^2(\R_+^3) \cap H_0^1(\R_+^3)$. That $-\Delta$ is self-adjoint is known by elliptic regularity since the plane is a smooth boundary. For convenience of the reader, however, we are going to provide a proof for this case, which is much simpler than the case of a general domain.   To do this we will use the basic criterion of self-adjointness according to which it suffices to prove that
$\text{Ran}(-\Delta + 1)=L^2(\R_+^3)$. 
Indeed, let $f$ be in $L^2(\R_+^3)$. Then its odd extension $\tilde{f}$ defined by
\begin{equation}
\tilde{f}(x_1,x_2,x_3)=
\begin{cases}
f(x_1,x_2,x_3), \text{ if   }x_1 \geq 0 \\
f(-x_1,x_2,x_3), \text{ if   }  x_1 < 0,
\end{cases}
\end{equation}
is in $L^2(\R^3)$. Thus
$g=(-\Delta + 1)^{-1} \tilde{f}$ is in $H^2(\R^3)$
and since it is odd as well it follows that $g(0,.,.)=0$, in the sense of a trace theorem which we state and prove for convenience of the reader in Appendix \ref{App:trace}. Using this we show in Appendix \ref{App:trace} that
  $$g|_{\R_+^3} \in H_0^1(\R_+^3)$$
and
$$(-\Delta +1)g|_{\R_+^3}= f,$$
 which completes the proof of self-adjointness of $-\Delta$ and therefore of $H_{e-}$ with operator domain $H^2(\R_+^3) \cap H_0^1(\R_+^3)$.	 

In the general case of a molecule self-adjointness and boundedness from below can be proven as  in the special case of hydrogen atom. The only terms that are new are the terms of electrons interacting with the mirror electrons of other electrons  in \eqref{eqI1}, but those can be controlled  with the arguments  of Lemma \ref{lemmatrapezoid}. The self-adjointness of $H(r,v)$ can also be proven in a similar fashion as in the case of the hydrogen atom-plate system. Note that the arguments of Appendix \ref{App:trace} are applicable because the boundary of $(\R_+ \times \R^2)^N$ consists of $N$ hyperplanes in $\R^{3N}$.

\subsection{The Feshbach map}
As in \cite{anapolitanossigal}, a main ingredient of the proof is the Feshbach map, which we now introduce. 

\begin{Definition}
\label{feshbach}
Let $\mathcal{H}$ be a separable Hilbert space,  $H$ a self-adjoint operator on $\mathcal{H}$ with domain $D(H)$, $P$  an orthogonal projection of finite rank with $\text{Ran}P \subset D(H)$, and $H^{\bot}=P^{\bot}HP^{\bot}$. For $\lambda \in \R$, so that $H^{\bot}-\lambda$ invertible, the Feshbach map is defined as 
$$F_P(\lambda)=PHP-PHP^{\bot}(H^{\bot}-\lambda)^{-1}P^{\bot}HP\mid _{Ran P}.$$

\end{Definition}

The following theorem is well--known, see \cite{feschbach} for example.
\begin{Theorem}
\label{TheoremF}
Let $H$, $\mathcal{H}$, $P$ be as above. For $\lambda \in \R$, we assume that 
\be H^{\bot}-\lambda \geq c \ee for some $c>0$. Then 
\be\label{Feshbacheigen} \lambda \text{ is an eigenvalue of } H \Leftrightarrow \lambda \text{ is an eigenvalue of } F_P(\lambda).
\ee

\end{Theorem}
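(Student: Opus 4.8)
The plan is to prove the standard Feshbach-Schur isospectrality statement \eqref{Feshbacheigen} by direct manipulation of the block decomposition with respect to $P$ and $P^\bot$. Writing any $\psi \in D(H)$ as $\psi = P\psi + P^\bot \psi =: u + w$ with $u \in \operatorname{Ran}P$ and $w \in \operatorname{Ran}P^\bot$, the eigenvalue equation $H\psi = \lambda\psi$ is equivalent to the system obtained by projecting onto $\operatorname{Ran}P$ and $\operatorname{Ran}P^\bot$:
\begin{align*}
 PHP\,u + PHP^\bot\, w &= \lambda u,\\
 P^\bot H P\, u + P^\bot H P^\bot\, w &= \lambda w.
\end{align*}
Here I use $\operatorname{Ran}P \subset D(H)$ so that $Hu$ makes sense, and the self-adjointness of $H$ to split $H$ between the two projections. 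The hypothesis $H^\bot - \lambda \geq c > 0$ guarantees that $H^\bot - \lambda = P^\bot H P^\bot - \lambda$ is boundedly invertible on $\operatorname{Ran}P^\bot$, so the second equation can be solved for $w$: $w = -(H^\bot - \lambda)^{-1} P^\bot H P\, u$.

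First I would carry out the forward direction. Given an eigenvector $\psi \neq 0$ of $H$ with eigenvalue $\lambda$, I substitute the expression for $w$ into the first equation, obtaining
$$\bigl(PHP - PHP^\bot (H^\bot-\lambda)^{-1} P^\bot H P\bigr) u = \lambda u,$$
i.e. $F_P(\lambda) u = \lambda u$. It remains to check $u \neq 0$: if $u = 0$ then $w = 0$ by the displayed formula for $w$, hence $\psi = 0$, a contradiction. So $u$ is a genuine eigenvector of $F_P(\lambda)$. For the reverse direction, given $u \neq 0$ in $\operatorname{Ran}P$ with $F_P(\lambda)u = \lambda u$, I define $w := -(H^\bot-\lambda)^{-1}P^\bot H P\, u$ and $\psi := u + w$. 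One verifies $\psi \in D(H)$ — the point being that $w \in \operatorname{Ran}(H^\bot-\lambda)^{-1}\subset D(H^\bot) \subset D(H)$ after noting $P^\bot HP\,u$ is a well-defined vector in $\operatorname{Ran}P^\bot$ since $u \in D(H)$ — and then a short computation shows that both projected equations hold, so $H\psi = \lambda\psi$. Again $\psi \neq 0$ because its $P$-component is $u \neq 0$.

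The main subtlety to handle carefully is the domain bookkeeping: one must make sure that $\psi = u + w$ actually lies in $D(H)$ and that all the operator products appearing in $F_P(\lambda)$ (in particular $PHP^\bot$ applied to $(H^\bot-\lambda)^{-1}P^\bot HP\,u$) are defined on the relevant vectors. Since $P$ has finite rank with range in $D(H)$, $PHP$ is a bounded operator (indeed finite rank) on $\operatorname{Ran}P$, and $P^\bot H P u$ lies in $\operatorname{Ran}P^\bot$ for every $u \in D(H)$; combined with $(H^\bot-\lambda)^{-1}$ mapping into $D(H^\bot) \subset D(H)$, everything is well-defined. The positivity assumption $H^\bot - \lambda \geq c$ is used precisely once, to invert $H^\bot - \lambda$; no spectral-theoretic input beyond that is needed. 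I expect the domain verification in the reverse direction to be the only place requiring genuine care; the algebra is otherwise a routine $2\times 2$ block elimination.
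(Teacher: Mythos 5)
Your proof is correct and is the standard Schur-complement/block-elimination argument for the Feshbach map. Note that the paper itself does not give a proof of Theorem~\ref{TheoremF} --- it simply calls the result well known and cites \cite{feschbach} --- and the argument in that reference is essentially the same $2\times 2$ elimination you carry out, so there is no genuinely different route to compare against. Your domain bookkeeping is also sound: since $P$ has finite rank with $\operatorname{Ran}P\subset D(H)$, $P^\bot$ preserves $D(H)$, $H^\bot$ is self-adjoint on $\operatorname{Ran}P^\bot$ with domain $D(H)\cap\operatorname{Ran}P^\bot$, and $(H^\bot-\lambda)^{-1}$ therefore maps into $D(H)$, which is exactly the inclusion you invoke in the reverse direction. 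One tiny stylistic point: in the forward direction, when you argue $u=0\Rightarrow w=0$, it is cleaner to say directly that $u=0$ in the second projected equation gives $(H^\bot-\lambda)w=0$, which forces $w=0$ by the positivity hypothesis --- this avoids appearing to reuse a formula that was itself derived from the equation being examined.
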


\begin{Bemerkung}
Theorem \ref{TheoremF} will play a central role in our proof of Theorem \ref{satz}. We will mostly need the implication "$\implies$" in \eqref{Feshbacheigen}. More precisely, we will use: if $E$ is the ground state energy of $H$ which is  strictly below the essential spectrum and 
\be  \label{conditionF} H^{\bot}-E \geq c >0, \ee
then $E$ is an eigenvalue of $F_P(E)$. In the special case that P has rank one, $F_P(E)$ can be identified with a scalar and using this identification we obtain 
\begin{equation}\label{Feshbachscalar}
E=F_P(E).
\end{equation}
 
\end{Bemerkung}

In the next section we prove that Condition \eqref{conditionF} is satisfied when the hydrogen atom is far enough from the plate, so Theorem \ref{TheoremF} can be applied. In Section \ref{generalizationmolecule}, we will do this for a general molecule.

\subsection{Proof of the lower bound \eqref{conditionF} in the case of a hydrogen atom for a suitable projection $P$}\label{Beweis der Bedingung}

\begin{proof}
We closely follow some ideas in  \cite{anapolitanossigal}.
First, one has to find a suitable projection $P$.

 We consider a  spherically symmetric $C^\infty$-function $h$ with $0 \leq h \leq 1$, with support in $B(0,\frac{1}{4})$, where $h=1$ in $B(0,\frac{1}{5})$ and let $h_r(x)=h(\frac{x}{r})$. We set
\be \label{abgeschnitten}
\psi(x):=\frac{h_r(x)\zeta(x)}{\|h_r(x)\zeta(x)\|},
\ee
where recall that $\zeta$, defined in \eqref{zeta}, is the ground state of the hydrogen atom. 
In other words $\psi$ is a cutoff ground state of the hydrogen atom. The presence of the cutoff function $h_r$ ensures that $\psi$ is in the domain of $H(r)$, recall \eqref{defHlin}.

Lastly, we define the projection 
\begin{equation}\label{Pdef}
P:=P_{\psi}=\mid \psi\hspace{1mm}\ra \la\hspace{1mm} \psi\mid. 
\end{equation}

To show that Condition \eqref{conditionF} is satisfied, we use the IMS localization formula, see for example \cite{cycon} Chapter 3.1. In the form that we need it the formula reads
\be\label{IMSH} H(r)=J_1H(r)J_1+J_2H(r)J_2-\vert\nabla J_1\vert^2-\vert\nabla J_2\vert^2\ee
where $J_1,J_2:\R^3\rightarrow\R$ are two $C^\infty$-functions, that have bounded derivatives and satisfy the equality $J_1^2+J_2^2=1$.

The IMS localization formula is a very helpful tool, since it allows us to evaluate the Hamiltonian in two different subspaces, near to the nucleus and far from it. This is easier to do than analyzing the Hamiltonian in the whole space directly. The price to pay however is the localization error $\vert\nabla J_1\vert^2+\vert\nabla J_2\vert^2$, which one must take into account.\\

First we construct the functions $J_1$ and $J_2$. Let $\chi_1,\chi_2:\R^3\rightarrow\R$ be two functions with $\chi_1,\chi_2\in C^{\infty}$, $0\leq\chi_i\leq1$ for $i \in \{1;2\}$, given by

\be \label{f1}
\chi_1(y) =\begin{cases} 0, &\vert y\vert\leq\frac{1}{4} \\
                     1,& \vert y\vert\geq\frac{2}{7}
       \end{cases}
\ee
and

\be \label{f2}
\chi_2(y) =\begin{cases} 1, &\vert y\vert\leq\frac{2}{7} \\
                     0,& \vert y\vert\geq\frac{1}{3}.
       \end{cases}
\ee

We define the functions $J_1$ und $J_2$ as follows:

\begin{align} \label{J1def}& J_1(x)=\frac{\chi_1(\frac{\vert x\vert }{r})}{\sqrt{\chi_1(\frac{\vert x\vert }{r})^2+\chi_2(\frac{\vert x\vert }{r})^2}}\\
\label{J2def}&J_2(x)=\frac{\chi_2(\frac{\vert x\vert }{r})}{\sqrt{\chi_1(\frac{\vert x\vert }{r})^2+\chi_2(\frac{\vert x\vert }{r})^2}}
\end{align}
\begin{center}
\includegraphics[scale=0.4]{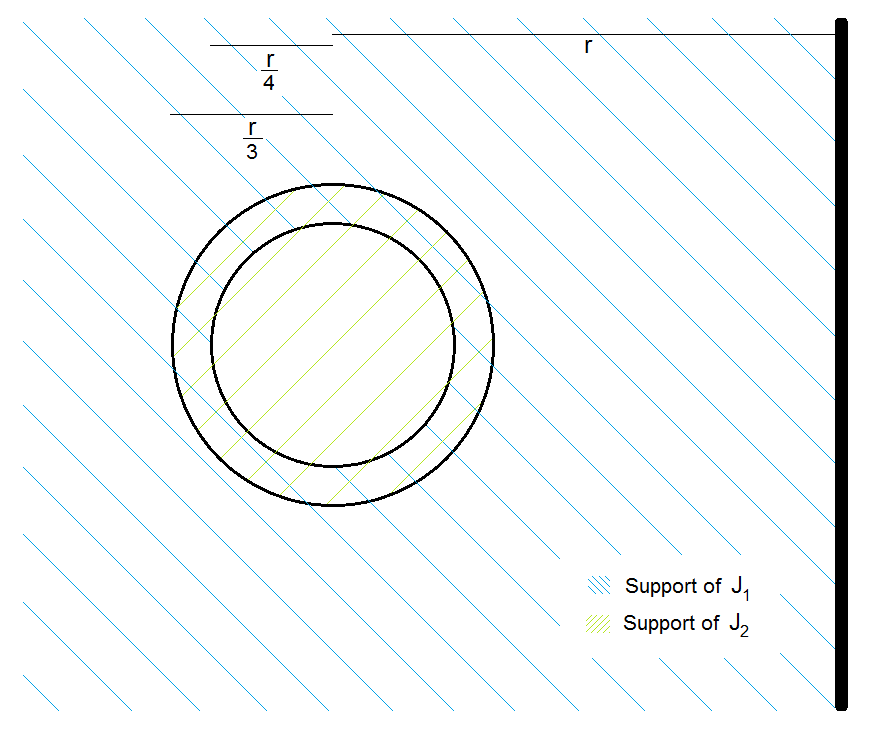}\\
\small Figure 4
\end{center}
  Figure 4   has a sketch of the supports of the functions $J_1, J_2$. Since $\chi_1^2 + \chi_2^2 \geq 1$, the functions $J_1, J_2$ inherit the $C^\infty$ property of $\chi_1$ and $\chi_2$ and moreover
 \begin{equation}\label{partition}
 J_1^2+J_2^2=1.
 \end{equation}
  Their derivatives are compactly supported and therefore bounded. Consequently, the IMS localization formula is applicable. On the support of $J_1$ the electron is far from the nucleus and on the support of $J_2$ the electron is close to the nucleus. Observe also that due to \eqref{J1def}, \eqref{J2def} there exists  $D>0$ such that
\begin{equation}\label{nablaJ12}
|\nabla J_1|^2 + |\nabla J_2|^2 \leq \frac{D}{r^2}.
\end{equation}
Moreover, due to \eqref{abgeschnitten} and the choice of the functions $\chi_1, \chi_2$ we have on $\text{supp} \psi$ that $J_2=1$ and $J_1=0$. Consequently,
\begin{equation}\label{Jpcom}
P J_i= J_i P, \quad i=1,2.
\end{equation}
We assess in the following the terms of the IMS localization formula (\ref{IMSH}) individually. First, we look at $J_1 H(r) J_1$.  Using \eqref{defHlin} we find 
\begin{equation*}
J_1 H(r) J_1 \geq J_1\bigg(-\Delta_{x}-\frac{1}{\vert x\vert}-\frac{\beta}{4r}-\frac{\beta}{2\vert2re_1+(x-x^{*})\vert}\bigg)J_1 
\end{equation*}
and therefore using that $\frac{\vert x\vert }{r} \geq \frac{1}{4}$ on $\text{supp} J_1$ we find that
\begin{align}
 J_1 H(r) J_1  \geq J_1\bigg(\underbrace{-\Delta_{x}-\frac{\beta}{2\vert 2re_1+(x-x^{*})\vert}}_{\substack{\geq \beta^2 E_{e^{-}}}}-\frac{C_1}{r}\bigg)J_1 \\ \label{J1H}
\geq \bigg( \beta^2 E_{e^{-}}-\frac{C_1}{r}\bigg)J^{2}_1,
\end{align}
for some $C_1>0$, where in the last step we used \eqref{Eescaling}. 
Therefore, 
\be
 P^{\bot}J_1 H(r) J_1 P^{\bot}= \bigg( \beta^2 E_{e^{-}}-\frac{C_1}{r}\bigg)P^\bot J^{2}_1 P^\bot.
\ee
\smallskip

\noindent
Doing the same for $J_2 H(r) J_2$ and using that $\frac{\vert x\vert }{r}\leq \frac{1}{3}$ on $\text{supp} J_2$ we extract 
\begin{align}\nonumber
J_2 H(r) J_2&\geq J_2\bigg(-\Delta_{x}-\frac{1}{\vert x\vert}-\frac{\beta}{4r}-\frac{\beta}{2\vert 2re_1+(x-x^{*})\vert}\bigg)J_2
\\ \nonumber
&\geq J_2\bigg(-\Delta_{x}-\frac{1}{\vert x\vert}-\frac{C_2}{r}\bigg)J_2,
\end{align}
 for a suitable constant  $C_2>0$.
Thus 
\begin{align}\nonumber
P^{\bot}J_2 H(r) J_2P^{\bot}&\geq P^{\bot}J_2\bigg(-\Delta_{x}-\frac{1}{\vert x\vert}\bigg)J_2P^{\bot}-P^\bot \frac{C_2}{r}J^{2}_2 P^\bot \\
\geq  \Big(E_h+& d\Big)  P^\bot J^{2}_2P^\bot -\frac{C_2}{r}P^\bot J^{2}_2 P^\bot \mbox{  }\mbox{  }\mbox{  }\mbox{  }\mbox{  with }d>0\label{J2H},
\end{align}
where in the last step we used \eqref{Jpcom}. The gap $d$ originates from the fact that $P^\bot$ projects out of the ground state energy eigenspace of the hydrogen atom. 

Combining \eqref{IMSH}, \eqref{nablaJ12}, \eqref{J1H}, and  \eqref{J2H} we find 
\begin{align*}
H^{\bot}(r) &\geq\Big( \beta^2 E_{e^{-}}-\frac{C_1}{r}\Big)P^\bot J^{2}_1P^\bot + \Big(E_h+d-\frac{C_2}{r}\Big)P^\bot J_2^{2}P^\bot -\frac{D}{r^{2}}\,,
\end{align*}
  which together with \eqref{Jpcom} implies 
\begin{align*}
&H^\bot(r) \geq \Big( \beta^2 E_{e^{-}}-\frac{C_1}{r}\Big) J^{2}_1P^\bot + \Big(E_h+d-\frac{C_2}{r}\Big)J_2^{2}P^\bot -\frac{D}{r^{2}}.
\end{align*}
Let $\tilde{d}:=\min\{d;\beta^2 E_{e^-}-E_h\}$. Due to \eqref{Eeequality} and \eqref{Eh} we have that $\beta^2 E_{e^-}-E_h>0$ for all $\beta \in (0,1]$ and in particular $\tilde{d}>0$.  As a consequence, using also \eqref{partition} we obtain that there exists $C>0$ such that
\begin{align}\nonumber
H^\bot(r) & \geq\Big( E_h+\tilde{d}\Big)P^\bot-\frac{C}{r}P^\bot-\frac{D}{r^{2}}\\ \label{HbotEhbound}
&\geq  E_h+\tilde{d}-\frac{C}{r}-\frac{D}{r^{2}}, 
\end{align}
where in the last step we used the inequalities  $P^\bot\leq 1$ and  $E_h + \tilde{d} \leq \beta^2 E_{e^-} <0$, see \eqref{Eeequality}. Recall that $E(r)$ is defined in \eqref{defE}, where here we do not write $E(r,v)$ as there is no dependence on $v$.
By the variational principle,  
\begin{equation}\label{Erroughbound}
E(r) \leq \la \psi, H (r)\psi \ra,
\end{equation}
where $\psi$ is the cutoff ground state of the hydrogen atom defined in \eqref{abgeschnitten}.
From \eqref{defHlin} and Lemma \ref{lemmatrapezoid} we find 
\begin{equation}\label{Hrlesshydrogen}
H(r) \leq -\Delta_x-\frac{1}{|x|}.
\end{equation}
Hence, using \eqref{zeta}, \eqref{abgeschnitten}, \eqref{Erroughbound} and \eqref{Hrlesshydrogen}   we obtain that there exists $c>0$ such that
\be \label{roughupperbound} 
E(r) \leq \la \psi, \left(-\Delta - \frac{1}{|x|}\right) \psi \ra = E_h + \mathcal{O}(e^{-cr}).
\ee
Using \eqref{HbotEhbound} and \eqref{roughupperbound}
it follows that if $r$ is not  small, then 
\begin{equation*}
	H^{\bot}(r)-E(r)\geq \frac{\tilde{d}}{2}>0\, .\qedhere
\end{equation*}
\end{proof}
In particular, we can use the Feshbach map to estimate the interaction energy of the hydrogen-plate system.

\subsection{Existence of a ground state: a  Weyl type Theorem in a half--space}\label {HVZ}
We already showed that the energy of the hydrogen-plate system is bounded from below, i.e.,  $E(r)> -\infty$. 
Now we want to prove existence of a ground state at least when the distance $r$ of the nucleus to the wall is not too small. We will do this by proving 
\begin{equation*}
	E(r) <\inf \sigma_{ess}(H(r))\, ,
\end{equation*}
i.e., the ground state energy is strictly below the essential spectrum, thus a ground state exists.

Physical intuition shows that the essential spectrum of a quantum system begins when the electron can escape to infinity.  In the general many-electron case
this is the content of the famous Hunziker-van Winter-Zhislin (HVZ) theorem, at least in the traditional case where a half--space is not present, see for example the original references \cite{Hunziker}, \cite{vanWinter}, and the beautiful proof of \cite{Zhislin}. In the one electron case this goes back to Weyl. 

Compared to the case of a free hydrogen atom our situation has two significant differences:  On the one hand, the electron does not need to have positive energy in order to be able to escape, but only an energy bigger than $\beta^2 E_{e^-}$, which is negative, since it can move along the surface of the plate to infinity. 
On the other hand, the nucleus alone interacting with the plate also has a negative energy, more precisely $-\beta/(4r)$ because of the presence of the plate. 

Hence one is lead to the conjecture  
\begin{equation}\label{infwesspek}
\inf \sigma_{ess}(H(r))=\beta^2 E_{e^-}-\frac{\beta}{4r},
\end{equation}
which we are going to prove in this section. 
  Observe also that \eqref{roughupperbound}
 together with \eqref{Eh} and \eqref{Eeequality} gives 
\be \label{Ergap}
E(r) < \beta^2 E_{e^-}-\frac{\beta}{4r},
\ee
if $r$ is not too small. As a consequence, it is enough to prove \eqref{infwesspek} in order to guarantee that the hydrogen-plate system has a ground state below the essential spectrum, 
at least for large enough $r$.


	


\begin{Proposition}
 The equation \eqref{infwesspek} holds.
\end{Proposition}

\begin{proof}
	This proof is an adaptation of the proof of the classical HVZ theorem, see e.g. \cite{mathematical concepts} Section 12.4.	
	First we show that 
	\begin{equation}\label{infwesspekleq}
	\beta^2 E_{e^-}-\frac{\beta}{4r}\geq\inf \sigma_{ess}(H(r)).
	\end{equation}
	To do so it suffices to prove that
	\begin{equation}\label{preinfwesspekleq}
	[\beta^2 E_{e^-}-\frac{\beta}{4r},\infty) \subset \sigma(H(r)).
	\end{equation}
	Let $\lambda \geq \beta^2	E_{e^-}-\frac{\beta}{4r}$ and 
	\begin{equation*}
		\psi_r(x_1):= \beta^\frac{3}{2} \frac{\beta(x_1+r)e^{\frac{\beta(x_1+r)}{8}}}{8\sqrt{2}}
	\end{equation*}
	Then we have that 
	\begin{align}\label{HeminusGS}
	\left(-\frac{d^2}{dx_1^2}-\frac{\beta}{2|2re_1+(x-x^*)|}\right)	\psi_r(x_1) = \beta^2 E_{e^-} \psi_r(x_1).
	\end{align}
	One can see that either with a direct computation or by recalling that \eqref{Hegroundstate} is the ground state of $-\frac{d^2}{dz_1^2}-\frac{1}{4z_1}$, see also \eqref{Heminus}-\eqref{Eenegativ}, and by using the rescaling argument \eqref{Skalierung}-\eqref{Eescaling}.
	
	Since $\lambda -  \beta^2 E_{e^-} + \frac{\beta}{4r}\ge 0$, hence is in the spectrum of  $-\frac{d^2}{dx_2^2}-\frac{d^2}{dx_3^2}$, we can choose, for every $n \in \N$,  $\phi_n(x_2,x_3)$ with $\phi_n\in C_c^\infty(\R^2)$, $\|\phi_n\|_{L^2}=1$, \begin{equation}\label{supppsinfar}
	\text{supp} \phi_n \subset\{(x_2,x_3) \in \R^2: |(x_2,x_3)| \geq n\},
	\end{equation} 
	and
	\begin{equation}\label{phin}
	\left\|\left(-\frac{d^2}{dx_2^2}-\frac{d^2}{dx_3^2}-\Big(\lambda - \beta^2 E_{e^-} + \frac{\beta}{4r}\Big)\right) \phi_n\right\|  \leq \frac{1}{n}.
	\end{equation}
	Defining now $\varphi_n(x)=\psi_r(x_1)\phi_n(x_2,x_3)$ we get
 	$\|\varphi_n\|_{L^2}=1$ and using \eqref{HeminusGS}, \eqref{supppsinfar}, \eqref{phin} and  \eqref{defHlin} one finds 
	\begin{equation}\label{HeapproxGS}
	\left\| (H(r)-\lambda) \varphi_n \right\| \to 0.
	\end{equation}
	Therefore, we have that
	$\lambda \in \sigma(H(r)) $. Thus, \eqref{preinfwesspekleq} is true, which implies \eqref{infwesspekleq}.
	\smallskip
	
	Now we show 
	\begin{equation}\label{infwesspekgeq}
	\beta^2 E_{e^-}-\frac{\beta}{4r}\leq\inf \sigma_{ess}(H(r)).
	\end{equation}
	Consider $\lambda \in \sigma_{ess}(H(r)) $. Then there is a Weyl sequence $(\psi_n)_{n \in \N}$ with $\la \psi_n| H(r) \psi_n\ra \rightarrow \lambda$ for $n \rightarrow \infty$.\\
	Let $J_{1,R}, J_{2,R}$ be defined as in \eqref{J1def}, respectively \eqref{J2def}, but with $r$ replaced by a parameter $R$. Note that multiplication with $J_{j,R}$ leaves the domain $H_2(\R^3_r) \cap H_0^1(\R^3_r)$  of $H(r)$ invariant even though $J_{j,R}$ might have support out of the half--space. 
	One can observe using \eqref{defHlin} that
	\begin{align}\nonumber
	J_{1,R} H(r) J_{1,R} \geq J_{1,R}\left(-\Delta-\frac{\beta}{2|2re_1+(x-x^*)|}-\frac{\beta}{4r}\right) J_{1,R} & + \mathcal{O}\left(\frac{1}{R}\right)
	\\ \label{J1HJ1} \geq \left(\beta^2 E_{e^-}-\frac{\beta}{4r}\right) J_{1,R}^2 + \mathcal{O}\left(\frac{1}{R}\right)&,
	\end{align}
	where in the last step we used \eqref{Eescaling}. 
	Using the IMS localization formula for  $J_{1,R}, J_{2,R}$
	we can write
	\be \begin{split}
		H(r)&=J_{1,R} H(r) J_{1,R}+\underbrace{J_{2,R} H(r) J_{2,R}}_{\substack{\geq E(r) J_{2,R}^2}}\underbrace{-\vert \nabla J_{1,R}\vert^2-\vert \nabla J_{2,R}\vert^2}_{\substack{ \mathcal{O}(\frac{1}{R^2})}}\\
		&\stackrel{\eqref{J1HJ1}}{\geq} \left( \beta^2 E_{e^-}-\frac{\beta}{4r}\right) J_{1,R}^2+ E(r) J_{2,R}^2+ \mathcal{O}\left(\frac{1}{R}\right)\\
		&= \beta^2 E_{e^-}-\frac{\beta}{4r} + \left( E(r) + \frac{\beta}{4r}  -  \beta^2 E_{e^-} \right) J_{2,R}^2+ \mathcal{O}\left(\frac{1}{R}\right),
	\end{split}\ee
	where in the last equality we used that $J_{1,R}^2 + J_{2,R}^2=1$.
	It follows that
	\begin{align}\nonumber
	\la \psi_n| H(r) \psi_n\ra  \geq \beta^2 E_{e^-}  - \frac{\beta}{4r}  + \left(\frac{\beta}{4r} - \beta^2 E_{e^-}+E(r)\right) \|J_{2,R}\psi_n\|^2 + \mathcal{O}(\frac{1}{R}).
	\end{align}
	Since $J_{2,R}(H(r)+i)^{-1}$ is compact and because $(H(r)+i)\psi_n=(H(r)-\lambda)\psi_n+(i-\lambda)\psi_n \rightarrow 0$ weakly for $n\rightarrow \infty$ by the properties of the Weyl sequence $\psi_n$, we can conclude:
	\be
	J_{2,R}\psi_n=J_{2,R}(H(r)+i)^{-1}(H(r)+i)\psi_n\rightarrow0 \mbox{ for } n\rightarrow \infty .
	\ee
	Thus 
	\be
	\lambda = \lim_{n\rightarrow \infty}\la \psi_n| H(r) \psi_n\ra\geq \beta^2 E_{e^-} - \frac{\beta}{4r} + \mathcal{O}(\frac{1}{R}),
	\ee
	from which \eqref{infwesspekgeq} follows if we take the limit $R \to \infty$.
	From \eqref{infwesspekgeq} and \eqref{infwesspekleq} we obtain \eqref{infwesspek}. 
\end{proof}

From \eqref{infwesspek} and \eqref{Ergap} the existence of a ground state of $H(r)$ follows immediately.


\section{Proof of the van der Waals asymptotic for the hydrogen-plate system}
\label{Beweis}
We will use the following \\  
\textbf{Notation}: 
We say that $f(r)=\mathcal{O}(r^{-n})$ if there exists $C,D>0$ such that if $r \geq C$ then $|f(r)| \leq D r^{-n}$. If $f(r) \in L^2$ or $f$ is an operator in $L^2$ then the inequality is understood in terms of the $L^2$ norm or the operator norm, respectively, depending on the context.

\begin{proof}[Proof of Theorem \ref{satz}: ]
We follow the general  strategy of \cite{anapolitanossigal}.
First we will begin with a proof of a weaker result, which can be extended to the case of a molecule interacting with the plate. Then we will refine the strategy using properties of the hydrogen atom in order to obtain Theorem \ref{satz} and this part can be extended for proving \eqref{eqn:cor2}.

We have already shown in the previous Section that the assumptions of Theorem \ref{TheoremF} are satisfied if we take the projection operator $P$ as in \eqref{Pdef}. Thus, abbreviating $E=E(r), H=H(r)$, Theorem \ref{TheoremF} yields
\be E\psi=F_P(E)\psi\, ,\ee
hence
\begin{align}\nonumber
E=\la\psi|E\psi\ra&=\la\psi|(PHP -PHP^{\bot}(H^{\bot}-E)^{-1}P^{\bot}HP)\psi\ra\\ 
\label{formelE}
&=\la\psi|H\psi\ra-\la\psi|PHP^{\bot}(H^{\bot}-E)^{-1}P^{\bot}HP\psi\ra.
\end{align}
We start by estimating the first term of the right hand side of \eqref{formelE}, namely $\la\psi|H\psi\ra$ and show afterwards, that the second term is small.
Inserting the Taylor expansion 
\be \nonumber
\frac{1}{|2re_1-u|}=\frac{1}{2r}+\frac{e_1 \cdot u}{4r^2}+\frac{3\big(e_1 \cdot u\big)^2-|u|^2}{16r^3}+\frac{5(e_1 \cdot u)^3-3(e_1 \cdot u) |u|^2 }{32|r|^4}+\mathcal{O}(r^{-5})
\ee 
in \eqref{defU}
and using that $e_1 \cdot x=-e_1 \cdot x^*$ we find 
\be \label{Iexpansion}
U(x)=\frac{-(x \cdot e_1)^2-|x|^2}{8r^3}+\frac{f_{odd}(x)}{16r^4}+ \mathcal{O}\bigg(\frac{|x^4|}{r^5}\bigg),
\ee
on  the support of $\psi$ where
\begin{equation}\label{foddineq}
f_{odd}(x)=15(e_1 \cdot x)^3 + 3(e_1 \cdot x) (|x|^2-|2 x \cdot e_1|^2).
\end{equation}
Inserting \eqref{Iexpansion} in \eqref{defHlin} and using the equality in \eqref{roughupperbound} we obtain 
\begin{align}\label{psiHeq}
\la\psi|H\psi\ra&= E_h+ \mathcal{O}(e^{-cr})+\la\psi|\halb U\psi\ra\\ \nonumber
&=E_h+ \mathcal{O}(e^{-cr})- \beta \int \frac{(x \cdot e_1)^2+|x|^2}{16r^3}|\psi(x)|^2dx\\ \nonumber
&+ \beta \int \frac{f_{odd}(x)}{32r^4}|\psi(x)|^2dx+ \mathcal{O}\bigg(\frac{1}{r^5}\bigg).
\end{align}
 Therefore, since $f_{odd}$ is an odd function of $x$, we find
\begin{align}\label{psiHpsi}
\la\psi|H\psi\ra=E_h - \beta \int \frac{(x\cdot e_1)^2+|x|^2}{16r^3}|\psi(x)|^2dx+  \mathcal{O}\bigg(\frac{1}{r^5}\bigg).
\end{align}
Since $\zeta(x)=\frac{1}{\sqrt{8\pi}}e^{\frac{-|x|}{2}}$
we have  $\int \frac{(x \cdot e_1)^2+|x|^2}{16}|\zeta(x)|^2dx=1$, which together with \eqref{abgeschnitten} and \eqref{psiHpsi} 
yields  
\begin{equation}
\\ \label{Erlowerbound}
\la\psi|H\psi\ra=E_h-\frac{\beta}{r^3}+\mathcal{O}(r^{-5}).
\end{equation}
Now we discuss the second term in (\ref{formelE}),  
\be
\la\psi|PHP^{\bot}(H^{\bot}-E)^{-1}P^{\bot}HP\psi\ra\, .
\ee
Clearly,  
\be \begin{split}
\label{PHPPHP}
\|PHP^{\bot}(H^{\bot}-E)^{-1}P^{\bot}HP\|&\leq\|(H^{\bot}-E)^{-1}\|\|P^{\bot}HP\|^2\\
&\leq\frac{1}{c}\hspace{1mm}\|P^{\bot}HP\|^2
\end{split}\ee
because $\|(H^{\bot}-E)^{-1}\|\leq\frac{1}{c}$ due to \eqref{conditionF}. So we need a bound on  $\|P^{\bot}HP\|$. Since $P^{\bot} P=0$, we find
\be \label{PbotHPdec}
\|P^{\bot}HP\|=\|P^{\bot}(H-E_h)P\|=\|P^{\bot}(H-E_h)\psi \|\, ,
\ee
and  using \eqref{defHlin} we derive
\be 
\|P^\bot H P\| \leq \|P^{\bot}\left(-\Delta_x-\frac{1}{|x|}-E_h\right)\psi\| + \|P^{\bot} \halb U \psi\|.
\ee
Using this together with the equality in \eqref{roughupperbound} we extract
\begin{equation}\label{HhminusEhpsiineq}
\|P^\bot H P\| \leq  \|P^{\bot} \halb U \psi\| + \mathcal{O}(e^{-dr}).
\end{equation}
Using moreover \eqref{Iexpansion}, \eqref{foddineq}, \eqref{abgeschnitten} and \eqref{zeta} we find that
\begin{equation}\label{Ipsiineq}
\|U \psi\| = \mathcal{O}(r^{-3}).
\end{equation}
From \eqref{PbotHPdec}, \eqref{HhminusEhpsiineq}, and \eqref{Ipsiineq} we extract that
\be \|P^\bot H P\| = \mathcal{O}(r^{-3}). \ee
By inserting this result in (\ref{PHPPHP}) we see that\\ 
\be\label{Feshbachnichtlinearabsch}
\|PHP^{\bot}(H^{\bot}-E)^{-1}P^{\bot}HP\|= \mathcal{O}(r^{-6}).
\ee

Using \eqref{formelE}, \eqref{Erlowerbound},      \eqref{Feshbachnichtlinearabsch} we arrive at
\be
E(r)=E_h-\frac{\beta}{r^3} + \mathcal{O}(r^{-5}),
\ee
which together with \eqref{interen} implies 
\be\label{eq:presatz}
W(r)=-\frac{\beta}{r^3} + \mathcal{O}(r^{-5}).
\ee
This strategy gives us a weaker result than Theorem \ref{satz} but we can generalize it to the case of a molecule interacting with a plate.
\medskip

\noindent
 Using properties of the hydrogen atom, we will refine this strategy in order to obtain Theorem \ref{satz}. Observing that the subtracted term in \eqref{formelE} is positive and using \eqref{Feshbachnichtlinearabsch} we find that there exists $C>0$ such that
$$ -\frac{C}{r^6} \leq E(r) - \la\psi|H\psi\ra \leq 0, $$
which together with \eqref{interen} and \eqref{psiHeq} implies that
\begin{equation}\label{Wrest}
 -\frac{C}{r^6}  \leq W(r) - \beta \la\psi|\frac{U}{2}\psi\ra \leq \mathcal{O}(e^{-cr}). 
\end{equation}
Thus, if we manage to prove that there exists $D>0$ such that
\begin{equation}\label{psiIpsiest}
-\frac{1}{r^3}-\frac{18}{r^5} -\frac{D}{r^7}  \leq \la\psi|\frac{U}{2}\psi\ra \leq -\frac{1}{r^3}-\frac{18}{r^5} + \mathcal{O}(e^{-cr})\, ,
\end{equation}
then Theorem \ref{satz} follows immediately. In the rest of the section we will prove 
\eqref{psiIpsiest}. With the help of Newton's Theorem, see Section 9.7 in \cite{newton},  one sees 
\begin{equation}
 \int \frac{2}{\vert 2re_1-x\vert}|\psi(x)|^2 dx = \frac{1}{r},
 \end{equation}
which together with \eqref{defU} and the fact that $|2re_1-(x-x^*)|=2(r-x_1)$, where $x_1=x \cdot e_1$, gives
\be 
\la\psi|\frac{U}{2}\psi\ra = \frac{1}{4} \la \psi| \left( \frac{1}{r}-\frac{1}{r-x_1} \right) \psi\ra. 
\ee
However, we can rewrite,  
\be\label{geometricdecomposition}
\frac{1}{r}-\frac{1}{r-x_1} = - \frac{1}{r} \frac{\frac{x_1}{r}}{1-\frac{x_1}{r}}=-\frac{1}{r} \sum_{k=1}^5\left(\frac{x_1}{r}\right)^k -\frac{1}{r} \frac{\left(\frac{x_1}{r}\right)^6}{1-\frac{x_1}{r}}.
\ee
which holds for all $x_1<r$. 
Therefore, using that $|\psi|^2$ is spherically symmetric so multiplication with an odd function and integration over $\R^3$ gives $0$,  we arrive at 
\be 
\la\psi|\frac{U}{2}\psi\ra =-\frac{1}{4r}\la \psi| \left(\frac{x_1^2}{r^2} + \frac{x_1^4}{r^4}  \right) \psi\ra - \frac{1}{4r}\la \psi| \left(\frac{x_1^6}{r^6}  \frac{1}{1 + \frac{x_1}{r}}  \right) \psi\ra. 
\ee
From \eqref{abgeschnitten} it follows that $|x_1/r|<\frac{1}{4}$ on $\text{supp} \psi$. Thus, with the help of the exponential decay of $\zeta$, we find that there exists $D>0$ such that
\be \label{psiIpsi1}
-\frac{1}{4r}\la \psi| \left(\frac{x_1^2}{r^2} + \frac{x_1^4}{r^4}  \right) \psi\ra -\frac{D}{r^7} \leq \la\psi|\frac{U}{2}\psi\ra  \leq -\frac{1}{4r}\la \psi| \left(\frac{x_1^2}{r^2} + \frac{x_1^4}{r^4}  \right) \psi\ra. 
\ee
and 
\be \label{psiIpsi2}
 -\frac{1}{4r}\la\psi|\left(\frac{x_1^2}{r^2} + \frac{x_1^4}{r^4}  \right) \psi\ra = -\frac{1}{4r}\la\zeta|\left(\frac{x_1^2}{r^2} + \frac{x_1^4}{r^4}  \right) \zeta \ra + O(e^{-cr})=-\frac{1}{r^3}-\frac{18}{r^5} + O(e^{-cr}), 
\ee
where the last step follows from explicitly calculating the integral,  using spherical coordinates and setting $x_1=R \cos(\theta)$. Using \eqref{psiIpsi1} and \eqref{psiIpsi2} we arrive at \eqref{psiIpsiest}.
With the estimates \eqref{Wrest} and \eqref{psiIpsiest} we can conclude  the proof of Theorem \ref{satz}. 
\end{proof}

\begin{Bemerkung}
	We point out possibilities to improve Theorem \ref{satz}.
One can push the decomposition \eqref{geometricdecomposition} not only to order $5$ but to  an arbitrary order. 
This yields an  asymptotic expansion of $\beta \la\psi|\frac{I}{2}\psi\ra$ in powers of $\frac{1}{r}$. 
Following \cite{anapolitanossigal} one can also prove that there exists $\sigma>0$, independent of $\beta$ such that
$$\la\psi|PHP^{\bot}(H^{\bot}-E)^{-1}P^{\bot}HP\psi\ra=-\beta^2 \frac{\sigma}{r^6} + O(\frac{1}{r^7}),$$
which with observations of \cite{barbaroux} can be  improved to 
$$\la\psi|PHP^{\bot}(H^{\bot}-E)^{-1}P^{\bot}HP\psi\ra=-\beta^2 \frac{\sigma}{r^6} + O(\frac{1}{r^8}).$$
In \cite{barbaroux} it was observed that in case of two atoms it is possible to do an expansion of $W(r)$ up to an arbitrary negative power of $r$. We may illustrate here how one can do this with the help of the Feshbach map: Due to the fact that for $P=|\psi \ra \la \psi|$ the assumptions of Theorem \ref{TheoremF} are fulfilled for $\lambda=E$ it is known, see e.g. \cite{feschbach}, that the ground state of the system is given up to normalization by
$$P \psi-(H^\bot-E)^{-1} P^\bot H \psi.$$
Thus replacing the full resolvent with the free resolvent one obtains a test function
 $$P \psi-(H_h^\bot-E_h)^{-1} P^\bot \halb I \psi,$$
which can be given in terms of the free system and is better than $\psi$. Such a test function was used in \cite{paperjannis} to prove upper bounds with error estimates better than the error estimates in \cite{anapolitanossigal}, and later in \cite{anapolitanoslewin} to improve the upper bound on the van der Waals asymptotic of  molecules of Lieb and Thirring in \cite{liebthirring}. One can apply the Feshbach map now with this new test function and by iterating this procedure one obtains  better test functions. Repeating this inductively, yields, in principle, an expansion of $W(r)$ up to an arbitrary negative power of $r$.
All these observations can improve Theorem \ref{satz} but for simplicity of the paper we shall not work them out explicitly. 
\end{Bemerkung}


   \section{Proof of Theorem \ref{satzallg}}\label{generalizationmolecule}

   
   \begin{proof}
   	In most of the Section we will prove part a) of Theorem 
   	\ref{satzallg}. At the end we will explain how to modify the proof in order to obtain part b). 
    Since many main ideas are very similar to those in the proof of Theorem \ref{satz}, we will sketch the proof and mostly focus on the explanation of the modifications.
   	 We shall use the Feshbach map with $P$ the orthogonal projection onto
   	the cutoff ground state eigenspace of $H_N$ having range
   	\begin{equation}\label{RanP}
 \text{Ran} P = \left\{ h_r^{\otimes N} \psi : \psi \in B\right\},
   	\end{equation}
   	where $h_r$ is the same as in \eqref{abgeschnitten} and $B$ was defined in \eqref{def:B}.
   	We will show, using Condition \eqref{molvermutung} that
   	\begin{equation}\label{IMSboundmol}
   	Q_N H^\bot Q_N-E \geq c >0,
   	\end{equation}
   	where $E=E(r,v)$ is the ground state energy of $Q_N H Q_N|_{\text{Ran}(Q_N)}$, where recall that $Q_N$ was defined in \eqref{def:Qm}. 
   	 Arguing similarly as in the proof of \eqref{roughupperbound}, we find that there exists $c>0$ such that 
   	 \begin{equation}\label{Eupper}
   	 E \leq E_N + \mathcal{O}\left(\frac{1}{r}\right).
   	 \end{equation}
   	 The reason that, unlike \eqref{roughupperbound}, we do not have an exponentially small error, is that we do not have an analogue of Lemma \ref{lemmatrapezoid}
   	 making sure that $H \leq H_N$. 
   	 Using \eqref{ENdef} and \eqref{HNdef} it follows that
   	 $E_N<0$, because in \eqref{HNdef} we have omitted the repulsion terms between the nuclei, which as pointed at the beginning does not change the interaction energy.  This together with \eqref{Eupper} implies that
   	\begin{equation}\label{Eneg}
   	E<0.
   	\end{equation}
when $r$ is not too small and thus 
   	 \begin{equation}
   		E= \inf \sigma (Q_N H Q_N),
   	 \end{equation}
   	namely the restriction onto the range of $Q_N$ in the right hand side can be removed. This helps us to apply the IMS localization formula. 
   	The proof of \eqref{IMSboundmol} is similar to the proof of \eqref{conditionF} and we shall sketch it. 
   	We use the IMS localization formula with the partition of unity $(J_a)_{a \in \{1,2\}^N}$, where for $a=(a_1,...,a_N)  \in \{1,2\}^N$
   	\begin{equation}
   	J_a=J_{a_1} \otimes J_{a_2} \otimes... \otimes J_{a_N},
   	\end{equation}
   	with $J_1, J_2$ defined in \eqref{J1def} and \eqref{J2def}. From \eqref{partition} we see that 
   	\begin{equation}\label{partitiongen}
  \sum_{a \in\{1,2\}^N} J_a^2=1.
   	\end{equation}
   	 Hence, we can apply the
    IMS localization formula to find that
   	$ H= \sum_{a \in \{1,2\}^N} (J_a H J_a - |\nabla J_a|^2)$
   	so that
   	\begin{equation}\label{IMSmol}
   	H= \sum_{a \in \{1,2\}^N} J_a H J_a -\mathcal{O}\left(\frac{1}{r^2}\right).
   	\end{equation}
    	We will now prove that for all $a \in \{1,2\}^N \setminus (2,\dots,2)$ there exists $\delta_a>0$ with
   	\begin{equation}\label{molgap}
   	Q_N J_a H J_a Q_N  > \left(E_N +\delta_a+\mathcal{O}\left(\frac{1}{r}\right)\right)Q_N J_a^2 Q_N.
   	\end{equation} 
Indeed,  if $a \in \{1,2\}^N \setminus (2,\dots,2)$ then $a$ has $k$ times $1$ and $N-k$ times 2, where $k >0$. Thus, we may assume without loss of generality that
   	$J_a=J_1^{\otimes k} \otimes J_2^{\otimes N-k}$. Since the repulsive terms between the first $k$ electrons and the rest $N-k$ are positive we find
   	$$	J_a H J_a \geq J_a \left( A_k \otimes I^{N-k} + I^k \otimes H_{N-k} + \mathcal{O}\left(\frac{1}{r}\right) \right) J_a,$$ where $I^m$ denotes the identity on $m$ particle coordinates and $A_k$ was defined in \eqref{Ak}. 
   	Therefore, using that $Q_N= Q_N (Q_k \otimes Q_{N-k})=  (Q_k \otimes Q_{N-k}) Q_N$ and that $Q_k \otimes Q_{N-k}$ commutes with $J_a$,  we find 
   	\begin{align} \label{lowernonbinding}
   	 Q_N J_a H J_a Q_N  \geq \left(\inf \sigma(Q_k A_k) + \inf \sigma (Q_{N-k} H_{N-k}) +\mathcal{O}\left(\frac{1}{r}\right)\right) Q_N J_a^2 Q_N,
   	\end{align}
   	which together with Condition \eqref{molvermutung} and Lemma \ref{lem:k electrons plate} implies \eqref{molgap}. On the other hand, for $a_0=(2,\dots,2)$ we have  
   	\begin{equation}\label{lowernonbind}
   Q_N J_{a_0} H J_{a_0} Q_N \geq Q_N J_{a_0}  H_N J_{a_0} Q_N +\mathcal{O}\left(\frac{1}{r}\right) J_{a_0}^2\, .
   	\end{equation}
   	Since $J_{a_0}=1$ on the support of $\Psi$, for all $\Psi$ in the range of $P$, one has $P^\bot J_{a_0}= J_{a_0} P^\bot$, and $Q_N$ commutes with $J_{a_0}$ and $P$. 
   	Thus
   	 	\begin{align}\nonumber
   	Q_N P^\bot	J_{a_0} H J_{a_0} & P^\bot Q_N  \geq Q_N J_{a_0} P^\bot Q_N \left(H_N +  \mathcal{O}\left(\frac{1}{r}\right)\right) P^\bot  J_{a_0} Q_N \\ \label{molgapa0}  \geq & \left(E_N+\delta_{a_0}  + \mathcal{O}\left(\frac{1}{r}\right)\right) Q_N P^\bot J_{a_0}^2 P^\bot Q_N,
   	 	\end{align}
   	 	for a $\delta_{a_0}>0$.
Using \eqref{IMSmol}, \eqref{molgap}, \eqref{molgapa0} and  \eqref{partitiongen}, we arrive at 
\begin{align}\nonumber 
Q_N P^\bot	 H P^\bot Q_N  \geq Q_N & P^\bot \left(E_N + \delta + \mathcal{O}\left(\frac{1}{r}\right)\right)P^\bot Q_N \\ \label{IMSprebound} & \geq E_N + \delta + \mathcal{O}\left(\frac{1}{r}\right),
\end{align}
where $\delta = \min_{a \in\{1,2\}^N} \delta_a>0$. Note that the last inequality simply follows from the fact that $E_N + \delta$ is negative -- one can see this by  arguing as in the proof of \eqref{Eneg}. 
 Using \eqref{Eupper} and \eqref{IMSprebound} we arrive at \eqref{IMSboundmol}.
   	
   	In the rest of the proof we identify $H$ with $HQ_N$.
   That
  \begin{equation}\label{existencegroundstate}
  E \in \sigma_{\text{disc}}(H)
  \end{equation} 	
    can be proven similarly as in the case of the hydrogen atom with an HVZ type argument. We will use, however, a faster argument relying on \eqref{IMSboundmol}. If $E$ were in the essential spectrum of $H$, then there would exist a Weyl sequence $\psi_n$ for $H$ and $E$. In particular, we would have 
   	 	\begin{equation}\label{weylseq}
   	 	\la \psi_n, H \psi_n \ra \to E.
   	 	\end{equation}
Since $P$ is a finite rank orthogonal projection, hence compact,  and $\psi_n \rightarrow 0$ weakly, this implies $P \psi_n \rightarrow 0$ and since the operator $HP$ is also bounded we would also have  $H P \psi_n \rightarrow 0$ strongly. But this together with 
\eqref{weylseq} would give $\|P^\bot \psi_n\| \rightarrow 1$
and 
	\begin{equation}
	\la \psi_n, H^\bot \psi_n \ra \to E,
	\end{equation}
contradicting \eqref{IMSboundmol}. 
   
   	Thus \eqref{existencegroundstate} holds, namely $E$ is in the discrete spectrum of $H$ and, in particular, it is an eigenvalue of $H$. From the last observation, \eqref{IMSboundmol} and \eqref{Feshbacheigen} it follows that E is an eigenvalue of $F_P(E)$ and thus there exists $\Psi \in \text{Ran} P$ with
   	\begin{equation}\label{Efesh}
   	E=\langle \Psi, F_P(E) \Psi \rangle=\langle \Psi, H \Psi \rangle - \langle P^\bot H \Psi, (H^\bot-E)^{-1}  P^\bot H \Psi \rangle.
   	\end{equation} 
   	From \eqref{eq:expdecay} and \eqref{RanP} it follows that  $(H_N-E_N) \Psi =\mathcal{O}(e^{-cr})$, which together with  \eqref{Hdec}, \eqref{interen}, and \eqref{Efesh} gives 
   	\begin{equation}\label{intenexpr}
   	W(r,v)=\beta \langle \Psi, \frac{I}{2} \Psi \rangle - \langle P^\bot\frac{I}{2}   \Psi, (H^\bot-E)^{-1}  P^\bot \frac{I}{2} \Psi \rangle + O(e^{-cr}).
   	\end{equation} 
   	Using for $v \in S^2$ the Taylor expansion
   	\begin{equation}
   	\frac{1}{|2rv-z|} =\frac{1}{2r}+\frac{z \cdot v}{4r^2}+ \frac{3(z \cdot v)^2-|z|^2}{16 r^3} + \mathcal{O}\left(\frac{|z|^3}{r^4}\right) \text{ } \forall z \leq \frac{5r}{3},
   	\end{equation}
   	on the support of $\Psi$, the assumptions \eqref{neutral}, \eqref{centermass0} and the definition  \eqref{eqI1} an elementary but lengthy computation gives 
   	\begin{align}\nonumber
   	I_1=\frac{N^2}{r} +  N \sum_{i=1}^{N} \frac{  x_i \cdot v}{2 r^2} +   N \sum_{i=1}^N \frac{3 (x_i \cdot v)^2 - |x_i|^2}{8r^3} \\\label{I1exp} +  N  \sum_{l=1}^M Z_l \frac{3 (y_l \cdot v)^2 - |y_l|^2}{8r^3} + \mathcal{O}\left(\frac{\sum_{i=1}^N |x_i|^3}{r^4}\right).
   	\end{align}
 Using   \eqref{eqI2} we can argue similarly, after observing the equality $x_j^* \cdot v=- x_j \cdot v$, to find 
   	\begin{align}\nonumber 
           I_2=\frac{N^2}{2r} +  N \sum_{i=1}^{N} \frac{  x_i \cdot v}{2r^2}  + N  \sum_{i=1}^N \frac{3 (x_i \cdot v)^2 - |x_i|^2}{8r^3} & \\ \label{I2exp}  +   \frac{3 (\sum_{i=1}^N     x_i \cdot v)^2 + (\sum_{i=1}^N  x_i)\cdot(\sum_{j=1}^N  x_j^*)}{8r^3}  + \mathcal{O}\left(\frac{\sum_{i=1}^N |x_i|^3}{r^4}\right)&,
   	\end{align}
   	and in a similarly way, with the help of \eqref{eqI3},
   	\begin{equation}\label{I3exp}
   	I_3=\frac{N^2}{2r} + N \sum_{l=1}^M Z_l \frac{3 (y_l \cdot v)^2 - |y_l|^2}{8r^3} + \mathcal{O}\left(\frac{\sum_{i=1}^N |x_i|^3}{r^4}\right).
   	\end{equation}
   	Using \eqref{Idec} together with \eqref{I1exp}, \eqref{I2exp} and \eqref{I3exp} yields 
   	\begin{equation*}
   	I=-\frac{3 (\sum_{i=1}^N     x_i \cdot v)^2 + (\sum_{i=1}^N  x_i)\cdot(\sum_{j=1}^N  x_j^*)}{8r^3} + \mathcal{O}\left(\frac{\sum_{i=1}^N |x_i|^3}{r^4}\right),
   	\end{equation*}
   	in the support of $\Psi$. 
   	Let $w=	\sum_{j=1}^N  x_j$. We extend $v$ to an orthonormal basis $v, v_1, v_2$  of $\R^3$. Using $w^* \cdot v=-w \cdot v$, and $w^* \cdot v_i=w \cdot v_i$ for $i=1,2$, as well as 
   	$$ w\cdot w^*= (w\cdot v) (v\cdot w^*)+ (w\cdot v_1) (v_1 \cdot w^*) + (w\cdot v_2) (v_2 \cdot w^*)\, ,$$
   	we arrive at
   	\begin{equation}\label{Iexp}
   	\halb I=-\beta\frac{ (\sum_{i=1}^N     x_i \cdot v)^2 + \left|\sum_{i=1}^N  x_i \right|^2}{16r^3} + \mathcal{O}\left(\frac{\sum_{i=1}^N |x_i|^3}{r^4}\right),
   	\end{equation}
   	in the support of $\Psi$. 
   	We now use the following well--known Lemma, which we prove for convenience of the reader. 
   	\begin{Lemma}\label{feshbachlowest}
   		$\Psi$ is a ground state of $F_P(E)$. In other words, E is the lowest eigenvalue of $F_P(E)$. 
   	\end{Lemma}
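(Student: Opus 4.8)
The plan is to exploit the variational characterization of the Feshbach map together with Theorem \ref{TheoremF}. Recall $P$ is the orthogonal projection onto the cutoff ground state eigenspace of $H_N$, and from \eqref{IMSboundmol} we have $H^\bot - E \geq c > 0$, so the resolvent $(H^\bot - E)^{-1}$ is a positive bounded operator on $\operatorname{Ran} P^\bot$. Consequently, for any $\Phi \in \operatorname{Ran} P$ the quadratic form of the Feshbach operator satisfies
\begin{equation*}
\langle \Phi, F_P(E) \Phi \rangle = \langle \Phi, H \Phi \rangle - \langle P^\bot H \Phi, (H^\bot - E)^{-1} P^\bot H \Phi \rangle \leq \langle \Phi, H \Phi \rangle.
\end{equation*}
First I would use this inequality to bound the lowest eigenvalue $\mu_0$ of $F_P(E)$ from above: since $\Psi \in \operatorname{Ran} P$ is the eigenvector with $F_P(E)\Psi = E\Psi$ provided by Theorem \ref{TheoremF}, we already know $\mu_0 \leq E$.

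Next I would establish the matching lower bound $\mu_0 \geq E$, which is where the essential content lies. Let $\Phi \in \operatorname{Ran} P$ be a normalized eigenvector of $F_P(E)$ with eigenvalue $\mu_0$. The key identity, valid whenever $H^\bot - E$ is invertible, is that $\mu_0$ is an eigenvalue of $F_P(E)$ if and only if $\mu_0$ is an eigenvalue of $H$ --- but one must be careful, since Theorem \ref{TheoremF} as stated requires the spectral parameter to be fixed at the value defining the Feshbach map. Instead I would argue directly: for $\Phi \in \operatorname{Ran} P$, consider the trial vector $\Phi - (H^\bot - E)^{-1} P^\bot H \Phi \in D(H)$. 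A direct computation using $H = PHP + PHP^\bot + P^\bot H P + H^\bot$ shows that
\begin{equation*}
\langle \Phi - (H^\bot - E)^{-1} P^\bot H \Phi, (H - E)(\Phi - (H^\bot - E)^{-1} P^\bot H \Phi) \rangle = \langle \Phi, (F_P(E) - E) \Phi \rangle.
\end{equation*}
Since $E$ is the ground state energy of $H$, the left-hand side is $\geq 0$; hence $\langle \Phi, F_P(E)\Phi\rangle \geq E$ for every $\Phi \in \operatorname{Ran} P$, which by the min-max principle gives $\mu_0 \geq E$.

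Combining the two bounds yields $\mu_0 = E$, so $E$ is the lowest eigenvalue of $F_P(E)$, and since $F_P(E)\Psi = E\Psi$, the vector $\Psi$ is a corresponding ground state. I expect the main obstacle to be the bookkeeping in the quadratic-form identity above: one needs the algebraic cancellation
\begin{equation*}
\langle \Phi, P^\bot H (H^\bot - E)^{-1} P^\bot H \Phi \rangle \text{ terms}
\end{equation*}
to collapse correctly, using that $P^\bot(H^\bot - E)(H^\bot-E)^{-1}P^\bot = P^\bot$ and that $P H P^\bot (H^\bot - E)^{-1} P^\bot (H^\bot - E) = PHP^\bot$ on $\operatorname{Ran}P^\bot \cap D(H)$. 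Care is needed about domains, since $(H^\bot - E)^{-1} P^\bot H \Phi$ must lie in $D(H)$; this follows because $\operatorname{Ran} P \subset D(H)$ so $H\Phi$ is defined, $P^\bot H \Phi \in \mathcal H$, and $(H^\bot - E)^{-1}$ maps into $D(H^\bot) \subset D(H)$ by the invertibility in \eqref{IMSboundmol}. Once domains are handled, the identity is a routine expansion, and the lemma follows.
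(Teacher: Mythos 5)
Your proof is correct, and it takes a genuinely different route from the paper. You prove the lower bound $\inf\sigma(F_P(E))\geq E$ directly by exhibiting, for each normalized $\Phi\in\operatorname{Ran}P$, the trial vector $\phi=\Phi-(H^\bot-E)^{-1}P^\bot H\Phi\in D(H)$, and verifying the Schur-complement identity
$\langle\phi,(H-E)\phi\rangle=\langle\Phi,(F_P(E)-E)\Phi\rangle$,
which together with $H-E\geq 0$ gives the bound; combined with the existence of the eigenvector $\Psi$ with eigenvalue $E$ from Theorem~\ref{TheoremF}, this pins down $E$ as the lowest eigenvalue. Your cancellation bookkeeping works out: with $A=PHP$, $B=PHP^\bot$, $D=H^\bot$, all cross terms pairing $\operatorname{Ran}P$ with $\operatorname{Ran}P^\bot$ vanish and the two terms $-2\langle\Phi,B(D-E)^{-1}B^*\Phi\rangle+\langle\Phi,B(D-E)^{-1}B^*\Phi\rangle$ collapse to the Feshbach correction. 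The domain issue you flag is handled correctly, since $P$ is finite rank with $\operatorname{Ran}P\subset D(H)$, so $(H^\bot-E)^{-1}$ maps into $D(H)\cap\operatorname{Ran}P^\bot$.

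The paper instead argues by contradiction: it shows that $\lambda\mapsto F_P(\lambda)$ is operator-monotone decreasing on $(-\infty,E]$ (via the resolvent identity), so $g(\lambda):=\inf\sigma(F_P(\lambda))$ is continuous and decreasing, and if $g(E)<E$ the intermediate value theorem would produce $E_0\in(g(E),E)$ with $E_0=g(E_0)$, hence by Theorem~\ref{TheoremF} an eigenvalue of $H$ below the ground state energy, a contradiction. Both approaches are standard; yours is the more constructive one, as it produces the explicit trial state $\phi$ witnessing the variational inequality (which, incidentally, is exactly the approximate ground-state formula mentioned in the Remark following Theorem~\ref{satz}), while the paper's monotonicity argument avoids any domain bookkeeping for $\phi$ but trades it for a continuity argument over a one-parameter family. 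Either would serve.
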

   	\begin{proof}	
   		If this were not the case, 
   		there would exist $\tilde{E}<E$, which is the smallest eigenvalue of $F_P(E)$. Note that
   		$F_P(\lambda)$ is a decreasing continuous function of $\lambda$ in $(-\infty,E]$, because for $\lambda_1<\lambda_2 \in (-\infty,E]$ 
   		$$F_P(\lambda_1)-F_P(\lambda_2)=P HP^\bot (H^\bot-\lambda_1)^{-1} (\lambda_1-\lambda_2)(H^\bot-\lambda_2)^{-1} P^\bot H P\le 0,$$
   		 where we also used \eqref{IMSboundmol} to see that $H^\bot-\lambda$ is invertible on the range of $P^\bot$ when $\lambda\le E$. 
   		 Therefore, $g(\lambda):=\inf\sigma(F_P(\lambda))$, the lowest eigenvalue of $F_P(\lambda)$, is also a decreasing continuous function of $\lambda$ in $(-\infty,E]$ with $\tilde{E}=g(E)<E$. But then the intermediate value theorem shows the existence of  $E_0 \in (\tilde{E},E)$ such that $E_0$ is eigenvalue of $F_P(E_0)$. Hence, by \eqref{Feshbacheigen}, $E_0$ would also be an eigenvalue of $H$, contradicting that $E$ is the ground state energy of $H$. 	
   	\end{proof}
   	
   	Using Lemma \ref{feshbachlowest}  together with  \eqref{def:Cv}, \eqref{IMSboundmol}, \eqref{intenexpr}, \eqref{Iexp} and \eqref{eq:expdecay},  
   	we arrive at \eqref{eqn:mainthm}, where Lemma \ref{feshbachlowest}  ensures that $C(v)$ is given by  maximizing the right hand side of \eqref{def:Cv}. This concludes the proof of part a) of Theorem \ref{satzallg}.

   	For part b) equation \eqref{eqn:cor1} follows from the simple observation that if we extend the Taylor expansion
   	\eqref{I1exp}-\eqref{I3exp} to the power of $r^{-4}$ all the terms are odd functions of $x_i,y_j$. Due to the invariance of the molecule and the one electron density with respect to the map $x \to -x$ we obtain an integral of an odd function which has to be zero.
   	
   	To obtain \eqref{eqn:cor2} we have to modify the argument similarly as in proving \eqref{psiIpsiest} for the hydrogen atom, after observing that \eqref{Idec}-\eqref{eqI3} simplifies 
   	because we have only one nucleus at 0. The only additional observation that we need is that, Newton's Theorem implies that for any ground state $\psi$ of the atom we have
   	\begin{equation}
\int\frac{1}{|-x_i + 2r v + x_j^*|}|\psi(x_1,\dots,x_N)|^2 dx_1 \dots dx_N=\frac{1}{2r}, \quad \forall i \neq j. 
    \end{equation}
   \end{proof}	
   
\section{Sketch of proof of the necessity of the binding Condition \eqref{molvermutung}}\label{proofnecessity}

Here we sketch how to prove that \eqref{molvermutung} is a necessary
Condition for Theorem \ref{satzallg}. First of all we remark that the general definition of the interaction energy is not given by \eqref{interen} but by
\begin{equation}\label{interengen}
W(r,v)= E(r,v) - E(\infty),
\end{equation}
where 
\begin{equation}\label{Einfty}
 E(\infty) = \lim_{r \rightarrow \infty} E(r,v),
\end{equation}
so the interaction energy measures, if it is negative, how much energy it costs to separate the system. That under \eqref{molvermutung} the equality $E(\infty)=E_N$ holds, follows immediately from our results. Suppose now  that \eqref{molvermutung} does not hold. We will prove that
\begin{equation}\label{Einftyest}
E(\infty) =  \min_{k \in \{1,\dots, N\}} \Big( \inf \sigma(Q_{N-k} H_{N-k}) 
+ k \beta^2 E_{e^-}\Big),
\end{equation}
and that $\exists C_1,C_2>0$ such that
\begin{equation}\label{interenest}
-\frac{C_2}{r} \leq W(r,v) \leq -\frac{C_1}{r}.
\end{equation}
We will prove \eqref{Einftyest} and \eqref{interenest} at once by proving that $\exists C_1,C_2>0$ such that
\begin{equation}\label{interenres}
-\frac{C_2}{r} \leq E(r,v) - \min_{k \in \{1,\dots, N\}} \Big( \inf \sigma(Q_{N-k} H_{N-k}) 
+ k \beta^2 E_{e^-}\Big) \leq -\frac{C_1}{r}.
\end{equation}
This can be done using ideas of the proof of Theorem 1.1 in \cite{morgansimon}. Note that we can not apply  the Feshbach map here as it is not clear if the system has a ground state. To prove the left inequality of \eqref{interenres} we observe that \eqref{lowernonbinding} and Lemma \ref{lem:k electrons plate} imply that 
	\begin{align}\nonumber
	Q_N J_a H J_a Q_N & \\\label{lowernonbinding3} \geq \left( \min_{k \in \{1,\dots, N\}} \Big( \inf \sigma(Q_{N-k} H_{N-k}) + k \beta^2 E_{e^-}\Big) +\mathcal{O}\left(\frac{1}{r}\right)\right) & Q_N J_a^2 Q_N, \text{ } \forall a \in \{1,2\}^N \setminus (2,\dots,2).
	\end{align}

Moreover from \eqref{lowernonbind} and the negation of \eqref{molvermutung} we find that

\begin{equation}\label{lowernonbinding4}
Q_N J_{a_0} H J_{a_0} Q_N  \geq \left( \min_{k \in \{1,\dots, N\}} \Big( \inf \sigma(Q_{N-k} H_{N-k}) + k \beta^2 E_{e^-}\Big) +\mathcal{O}\left(\frac{1}{r}\right)\right) Q_N J_{a_0}^2 Q_N.
\end{equation}
Inserting \eqref{lowernonbinding3} and \eqref{lowernonbinding4} in \eqref{IMSmol} and using \eqref{partitiongen} we arrive at the left inequality of  \eqref{interenres}. To prove the right inequality of \eqref{interenres} one just has to choose a sequence of appropriate test functions as follows: Pick $k_0$ such that
	\begin{equation}
 \inf \sigma(Q_{N-k_0} H_{N-k_0}) + k_0 \beta^2 E_{e^-} = \min_{k \in \{1,\dots, N\}} \Big( \inf \sigma(Q_{N-k} H_{N-k}) + k \beta^2 E_{e^-}\Big).
	\end{equation} 
Now choose a cut off ground state of the positive ion with Hamiltonian $Q_{N-k_0} H_{N-k_0}$ and place the other electrons far away from each other and from the ion. Then the Coulomb attraction of the positive ion with its mirror image will dominate. We omit the details. 
\appendix


\section{Potential created by a charge q outside a dielectric half--space}
\label{spiegelladung}
In this section we  illustrate how to derive the Hamiltonian $H(r,v)$ in a case of a dielectric or perfectly  conducting  plate. We consider two infinite dielectric media with permittivities 
$\epsilon_1$ and $\epsilon_2$, that have an infinite plane as their interface. We use coordinates such that the first component vanishes at the interface. 

\noindent 
First we derive the Green's function. For a more detailed derivation we refer to \cite{schwinger} Chapters 12-14.
Then with the help of the Green's  function we derive the interaction energy for a charge distribution with several charges. Thus the derivation works for the general case of a molecule interacting with a dielectric plate. 

\begin{center}
	\includegraphics[scale=0.3]{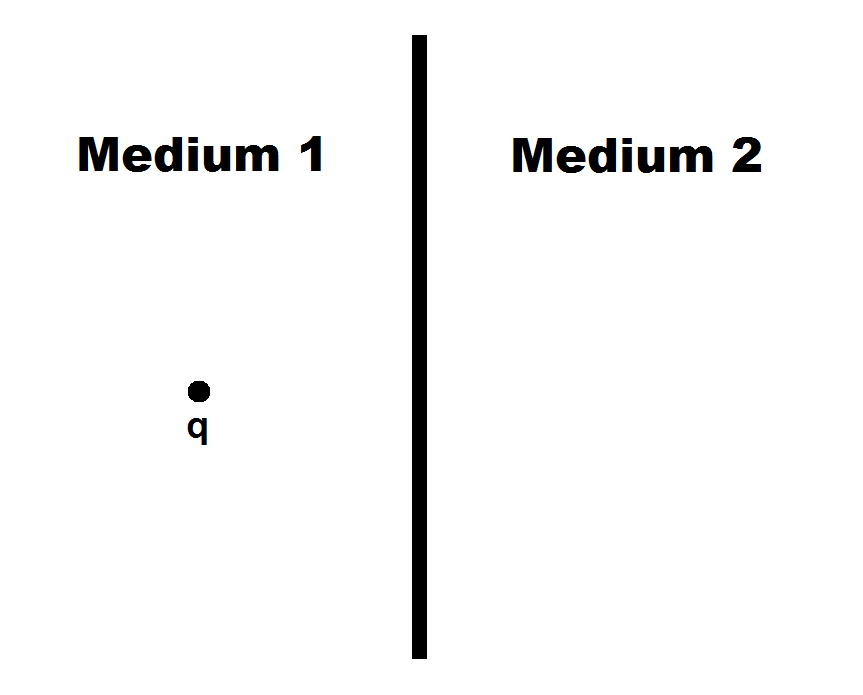}\\
	\small Figure 5
\end{center}

If a charge $q$ with $q=1$ is in Medium 1 at the position $\xp$, see Figure 5, the Green's function must satisfy the conditions 
\be \begin{cases}
	\text{  }\text{  }-\epsilon_1\Delta G_1(\x,\xp)=\delta(\x-\xp), & \text{for } \x \text{ in Medium } 1\\
	-\epsilon_2\Delta G_2(\x,\xp)=0,& \text{for } \x \text{ in Medium } 2.
\end{cases}\ee
We use the method of mirror images and make the ansatz 

\be
G_1(\x,\xp)=\frac{1}{4\pi  \epsilon_1}\bigg(\frac{1}{\mid \x-\xp\mid}+\frac{A}{\mid \x-\xp_s\mid}\bigg),
\ee

\be
G_2(\x,\xp)=\frac{1}{4\pi  \epsilon_2}\frac{B}{\mid \x-\xp\mid},
\ee
where $\xp_s$ is the position of the mirror image of the charge, and $A$ and $B$ are to be determined.\\

Let $v=(0, v_2, v_3 )^t$ now be a point on the interface. Then 
\be
G_1(v,\xp)=G_2(v,\xp)\, ,
\ee
thus also 
\be
\frac{1}{\epsilon_1}\bigg(\frac{1}{\mid v-\xp\mid}+\frac{A}{\mid v-\xp_s\mid}\bigg)=\frac{1}{\epsilon_2}\frac{B}{\mid v-\xp\mid}
\ee

\be \label{AB}
\Leftrightarrow\frac{1}{\epsilon_1}(1+A)=\frac{B}{\epsilon_2},
\ee
because $v_1=0$, which implies $\frac{1}{\mid v-\xp\mid}=\frac{1}{\mid v-\xp_s\mid}$.\\

Due to the fact that the normal component of the electric displacement has to be continuous the Green's function must also satisfy the boundary conditions
\be
\epsilon_1 \frac{\partial G_1(x,y)}{\partial x_1}\bigg|_{x=v}=\epsilon_2\frac{\partial G_2(x,y)}{\partial x_1}\bigg|_{x=v}
\ee
\be
\Leftrightarrow A=1-B.
\ee
This together with \eqref{AB} gives

\be
B=\frac{2\epsilon_2}{\epsilon_1+\epsilon_2} ,\text{  } A= \frac{\epsilon_1-\epsilon_2}{\epsilon_1+\epsilon_2}\, ,
\ee
hence 
\be \label{G1}
G_1(\x,\xp)=\frac{1}{4\pi \epsilon_1}\bigg(\frac{1}{\mid \x-\xp\mid}+\frac{\epsilon_1-\epsilon_2}{\epsilon_1+\epsilon_2}\frac{1}{\mid \x-\xp_s\mid}\bigg)
\ee
\be
G_2(\x,\xp)=\frac{2}{4\pi (\epsilon_1+\epsilon_2)}\frac{1}{\mid \x-\xp\mid}.
\ee

\smallskip

\noindent 
Now we are going to derive the interaction potential with the help of the Green's function. We do this in a more general setting of an interacting system, following \cite{schwinger} Chapter 15. We assume that
\smallskip 

1) We know the full Green's function $G^w(\x,\xp)$ at least in the right half--space.

2) $G^w(\x,\xp)=G_0(\x,\xp) + G_d(\x,\xp)$
where $G_0$ is the free Green's function and $G_d(\x,\xp)$ the perturbation and that 
$\lim_{x \to \xp} G_d(\x,\xp)$ exists.

In our case 

\begin{equation}\label{Gwg0}
G^w(\x,\xp)=G_1(\x,\xp), \quad G_0(\x,\xp)= \frac{1}{4\pi \epsilon_1}\frac{1}{\mid \x-\xp\mid},
\end{equation}
so $G_0$ would be the Green's function if there were no Medium 2. 
Thus by \eqref{G1} and \eqref{Gwg0} we have 
\begin{equation}
G_d(\x,\xp):=G_1(\x,\xp) - G_0(\x,\xp)=
\frac{1}{4\pi \epsilon_1}\frac{\epsilon_1-\epsilon_2}{\epsilon_1+\epsilon_2}\frac{1}{\mid \x-\xp_s\mid},
\end{equation}
and, in particular,
 \begin{equation}
 \lim_{x \to \xp} G_d(\x,\xp)=
 \frac{1}{4\pi \epsilon_1}\frac{\epsilon_1-\epsilon_2}{\epsilon_1+\epsilon_2}\frac{1}{\mid \xp-\xp_s\mid}.
 \end{equation}
 Thus in our case the Assumptions 1 and 2 are satisfied. 
 \smallskip
 
 \noindent 
 Now we will explain how the Hamiltonian is derived and in particular how the presence of the factor $\frac{1}{2}$ in the interaction of charges--mirror charges occurs. 
  In terms of the Green's function, the electrostatic energy 
  of the system of a charge distribution $\rho$ is given by 
 \begin{equation}
 E^w(\rho)= D^w(\rho,\rho)= \frac{1}{2} \int_V \int_V dx_1 dx_2 G^w(x_1,x_2) \rho(x_1) \rho(x_2),
 \end{equation}
where $V$ is the region occupied by Medium 1. In the absence of Medium 2, 
 the free electrostatic energy of a charge distribution $\rho$ is given by 
 \begin{equation}
 E_0(\rho)= D_0(\rho,\rho)= \frac{1}{2} \int_V \int_V dx_1 dx_2 G_0(x_1,x_2) \rho(x_1) \rho(x_2).
 \end{equation}
We consider now a charge distribution
\begin{equation}
\rho=\sum_{j=1}^N \rho_j.
\end{equation}
The \emph{interaction energy} of the entire system is then given by the energy difference 
\begin{align}\nonumber
E_{\text{int}}(\rho):= & E^w(\rho) - \sum_{j=1}^N E_0(\rho_j) = D^w\bigg(\sum_{j=1}^N \rho_j, \sum_{j=1}^N \rho_j\bigg) - \sum_{j=1}^N D_0(\rho_j,\rho_j) \\  \label{intendec}
 & =\sum_{i \neq j}D^w(\rho_i, \rho_j) + \sum_{j=1}^N \big(D^w(\rho_j, \rho_j) -D_0(\rho_j, \rho_j)\big).
\end{align}
In the limit $\rho_j \rightarrow q_j \delta_{x_j}$ we have
\begin{align}\nonumber
 D^w(\rho_i, \rho_j) & \to \frac{1}{2} G^w(x_i,x_j)\\\label{dwijlim} 
 	&= \frac{1}{2} \underbrace{q_i q_j G_0(x_i,x_j)}_{(1)}   
 		+ \underbrace{\frac{1}{2} q_i q_j G_d(x_i,x_j)}_{(2)}.
\end{align}
where (1) is the direct Coulomb  interaction of the the charges 
$ q_i,q_j$ and (2) is the interaction of $q_i$ with the mirror image of the charge $q_j$. 

We will see later that, when summed over all $i \neq j$ the term 
$\frac{1}{2} q_i q_j G_0(x_i,x_j)$ appears twice, so we get rid of the 
factor $\frac{1}{2}$. Moreover, while the limit of $D^w(\rho_i, \rho_j)$ does not exist when $\rho_j$ approaches a point charge, the difference of the electrostatic energies $D^w(\rho_j, \rho_j) -D_0(\rho_j, \rho_j)$ converges since 
$G^w=G_0 + G_d$ in the right half--space. In particular,  
$$
 D^w(\rho_j, \rho_j) - D_0(\rho_j,\rho_j)= \frac{1}{2} \int_V \int_V dx_1 dx_2 G_d(x,y) \rho_j(x) \rho_j(y).$$
 Hence
 \begin{equation}\label{Gdjjlim}
 D^w(\rho_j, \rho_j) - D_0(\rho_j,\rho_j) \to \underbrace{\frac{1}{2} q_j^2 G_d(x_j,x_j)}_{\text{interaction of charge } q_j \text{ with its own mirror charge}}.
\end{equation}
Using \eqref{intendec}, \eqref{dwijlim} and \eqref{Gdjjlim}, we find 
\begin{equation}
E_{\text{int}}=\underbrace{
	\sum_{1 \leq i < j \leq N} q_i q_j G_0(x_i,x_j)}_{\text{direct Coulomb interaction terms}} + \underbrace{
	\frac{1}{2}\left( \sum_{i  \neq j} q_i q_j G_d(x_i,x_j) +   \sum_{j=1}^N q_j^2 G_d(x_j,x_j) \right)}_{ \text{interaction of charges with mirror charges} }.
\end{equation}
This means that the potential of the system dielectric-plate/point-charges can be easily computed with the help of the mirror charges. 
\smallskip

It is also interesting to note that in the limit $\epsilon_2 \rightarrow \infty$ which is relevant  for perfect conductors, we have $\frac{\epsilon_1-\epsilon_2}{\epsilon_1+\epsilon_2}\rightarrow -1$. This is exactly the potential obtained with the classical method of mirror images.

\section{A trace Theorem for functions in $H^1(\R^n)$ restricted on a hyperplane}\label{App:trace}
Such a trace Theorem is certainly well--known	 in the literature. Since most books deal with the harder case 
of domains with suitable smooth boundaries, we provide here the proof of the considerably simpler case of a half--space, for convenience of the reader. 
A point in $\R^n$ is denoted by $(x,x')$, where $x \in \R$ and $x' \in \R^{n-1}$. With $\mathcal{S}(\R^m)$ we denote the set of Schwartz functions in $\R^m$.

\begin{Theorem}\label{tracethm}
There exists a unique linear continuous map 
$T: H^1(\R^n) \rightarrow L^2(\R^{n-1})$
with $T f(x')= f(0,x')$ for all $f \in \mathcal{S}(\R^n)$ and 
\begin{equation}\label{traceineq}
\|T f\|_{L^2} \leq \sqrt{\pi} \|f\|_{H^1}.
\end{equation}
\end{Theorem}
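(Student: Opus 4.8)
The plan is the classical one for $L^2$ trace estimates: prove the bound on a dense subspace, then extend by continuity. Since $\mathcal{S}(\R^n)$ is dense in $H^1(\R^n)$, once we show that the linear map $f\mapsto f(0,\cdot)$ (which for $f\in\mathcal{S}(\R^n)$ is well defined and takes values in $L^2(\R^{n-1})$) satisfies \eqref{traceineq} on $\mathcal{S}(\R^n)$, it extends uniquely to a bounded linear operator $T\colon H^1(\R^n)\to L^2(\R^{n-1})$ obeying the same bound; uniqueness of $T$ follows from density together with continuity, and the defining property $Tf(x')=f(0,x')$ for $f\in\mathcal{S}(\R^n)$ holds by construction. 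Thus the entire content is \eqref{traceineq} for Schwartz functions.

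To prove \eqref{traceineq} for $f\in\mathcal{S}(\R^n)$, I would first take the Fourier transform $\widehat{f}(\xi_1,x')$ in the first variable only. Fourier inversion gives $f(0,x')=\tfrac{1}{2\pi}\int_{\R}\widehat{f}(\xi_1,x')\,d\xi_1$, so writing $\widehat{f}(\xi_1,x')=(1+\xi_1^2)^{-1/2}\cdot(1+\xi_1^2)^{1/2}\widehat{f}(\xi_1,x')$ and using the Cauchy--Schwarz inequality in $\xi_1$,
\[
|f(0,x')|^{2}\leq\frac{1}{4\pi^{2}}\Big(\int_{\R}\frac{d\xi_1}{1+\xi_1^{2}}\Big)\int_{\R}(1+\xi_1^{2})\,|\widehat{f}(\xi_1,x')|^{2}\,d\xi_1 ;
\]
the first bracket equals $\pi=\int_{\R}(1+t^{2})^{-1}\,dt$, which accounts for the $\pi$ in the stated constant. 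Now integrate over $x'\in\R^{n-1}$ and apply the one--dimensional Plancherel identity in the $x_1$ variable, which turns $\int_{\R}(1+\xi_1^{2})|\widehat{f}(\xi_1,x')|^{2}\,d\xi_1$ into $2\pi\int_{\R}\big(|f(x_1,x')|^{2}+|\partial_{x_1}f(x_1,x')|^{2}\big)\,dx_1$. This yields $\|f(0,\cdot)\|_{L^{2}(\R^{n-1})}^{2}\leq\tfrac12\big(\|f\|_{L^{2}(\R^n)}^{2}+\|\partial_{x_1}f\|_{L^{2}(\R^n)}^{2}\big)\leq\pi\|f\|_{H^{1}(\R^n)}^{2}$, which is \eqref{traceineq}. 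A purely one--dimensional alternative avoiding the Fourier transform is to observe, for each fixed $x'$, that $|f(0,x')|^{2}=-\int_{0}^{\infty}\partial_{x_1}|f(x_1,x')|^{2}\,dx_1\leq\int_{0}^{\infty}\big(|f(x_1,x')|^{2}+|\partial_{x_1}f(x_1,x')|^{2}\big)\,dx_1$ and integrate in $x'$; this already gives the bound with constant $1\le\sqrt{\pi}$.

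There is no genuine obstacle here: the statement is elementary and is included only for the reader's convenience, and the few points needing care are routine. First, the manipulations used on $\mathcal{S}(\R^n)$ — Fubini, Fourier inversion, Plancherel, and differentiation under the integral sign — are all legitimate on the Schwartz class. Second, the passage from the dense subspace $\mathcal{S}(\R^n)$ to all of $H^1(\R^n)$ is the standard bounded-linear-extension argument, and one should note that the extension is automatically unique and still has values in $L^2(\R^{n-1})$. I would also remark that $\sqrt{\pi}$ is far from optimal — the computation above in fact delivers the smaller constant $1/\sqrt{2}$ — but any explicit constant suffices for the use made of the theorem in Section~\ref{infimumnichtunendlich}.
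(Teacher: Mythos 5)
Your main argument is correct and is essentially the paper's proof: apply the partial Fourier transform in $x_1$, write $f(0,x')$ via Fourier inversion, insert the weight $(1+\xi^2)^{\pm 1/2}$ and apply Cauchy--Schwarz in $\xi$ to extract the factor $\pi=\int_\R(1+\xi^2)^{-1}\,d\xi$, finish with Plancherel, and then extend from $\mathcal{S}(\R^n)$ by density. The only cosmetic differences are the order of operations (you apply Cauchy--Schwarz pointwise in $x'$ and then integrate, whereas the paper first moves the $L^2(\R^{n-1})$ norm inside the $\xi$-integral via Minkowski's integral inequality and then applies Cauchy--Schwarz) and your Fourier normalization and tighter constant tracking, which correctly yields $1/\sqrt{2}$ in place of the paper's $\sqrt{\pi}$; your side remark giving the elementary one-dimensional FTC argument with constant $1$ is also correct.
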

\begin{proof}
 Let $f \in \mathcal{S}(\R^n)$. Then by the Fourier inversion formula we have 
 $$f(0,x')= \int_\R (\mathcal{F}_1 f)(\xi, x') d\xi,$$
where $(\mathcal{F}_1 f)$ denotes the Fourier transformation of $f$ only with respect to the first variable. Thus with the help of the triangle inequality for the $L^2$ norm in the $x'$ integral 
 $$\Big(\int_{\R^{n-1}} |f(0,x')|^2 dx'\Big)^{\frac{1}{2}}\leq \int_{\R} \Big(\int_{\R^{n-1}} |(\mathcal{F}_1 f)(\xi, x')|^2 d x'\Big)^\frac{1}{2} d \xi,$$
 or
  $$\|Tf\|_{L^2} \leq \int_\R (1+|\xi|^2)^{-\frac{1}{2}} \Big(\int_{\R^{n-1}} (1+|\xi|^2)|(\mathcal{F}_1 f)(\xi, x')|^2 d x'\Big)^\frac{1}{2} d \xi.$$
  Thus applying Cauchy-Schwarz we find 
   $$\|Tf\|_{L^2} \leq \Big(\int_\R (1+|\xi|^2)^{-1} d \xi \Big)^\frac{1}{2}  \Big(\int_{\R \times \R^{n-1}} (1+|\xi|^2)|(\mathcal{F}_1 f)(\xi, x')|^2 d x' d \xi \Big)^\frac{1}{2},$$
   which together with  $\int_\R (1+\xi^2)^{-1} d \xi = \pi$ and Plancherel's Theorem  gives \eqref{traceineq} for all $f \in \mathcal{S}(\R^n)$. Thus the operator $T$ can be uniquely extended on the whole $H^1(\R^n)$ and its extension also satisfies  the bound  \eqref{traceineq}.
\end{proof}

\begin{Theorem}\label{tracethm2}
	We consider $f \in H^1(\R^n)$ which is odd in the $x$ variable. Then $Tf=0$ and $f|_{\R_+\times \R^{n-1}} \in H_0^1(\R_+\times \R^{n-1})$. 
\end{Theorem}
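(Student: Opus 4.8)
The plan is to treat the two assertions in turn, in both cases reducing everything to the odd Schwartz approximation already needed to make sense of $T$.

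\textbf{Step 1: $Tf=0$.} Choose any $(f_k)\subset\mathcal S(\R^n)$ with $f_k\to f$ in $H^1(\R^n)$ and replace $f_k$ by its odd part $\widetilde f_k(x,x'):=\tfrac12\big(f_k(x,x')-f_k(-x,x')\big)$. Reflection $g\mapsto g(-\cdot,\cdot)$ is an isometry of $H^1(\R^n)$, so $g\mapsto\tfrac12\big(g-g(-\cdot,\cdot)\big)$ is a bounded projection; since it fixes the odd function $f$ we get $\widetilde f_k\to f$ in $H^1(\R^n)$, with each $\widetilde f_k\in\mathcal S(\R^n)$ odd in $x$, hence $\widetilde f_k(0,x')\equiv 0$. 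By the defining property of $T$ and \eqref{traceineq}, $Tf=\lim_k T\widetilde f_k=\lim_k \widetilde f_k(0,\cdot)=0$ in $L^2(\R^{n-1})$.

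\textbf{Step 2: $f|_{\R_+\times\R^{n-1}}\in H_0^1(\R_+\times\R^{n-1})$.} Put $g:=f|_{\R_+\times\R^{n-1}}$ and $u_k:=\widetilde f_k|_{\R_+\times\R^{n-1}}$. Restriction to a subdomain does not increase the $H^1$-norm, so $u_k\to g$ in $H^1(\R_+\times\R^{n-1})$, and since $H_0^1(\R_+\times\R^{n-1})$ is by definition closed in this norm, it suffices to show $u_k\in H_0^1(\R_+\times\R^{n-1})$ for each $k$. Now $u_k$ is smooth up to $\{x=0\}$, Schwartz, and vanishes on $\{x=0\}$, so $|u_k(x,x')|\le x\,\|\partial_x u_k\|_{L^\infty}$. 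Pick $\chi\in C^\infty(\R)$ with $\chi\equiv 0$ on $(-\infty,1]$, $\chi\equiv 1$ on $[2,\infty)$, $0\le\chi\le1$, and set $\chi_\epsilon(x):=\chi(x/\epsilon)$. Then $(1-\chi_\epsilon)u_k$ is supported in $\{0<x\le 2\epsilon\}$ where $|u_k|\lesssim x$, so $\|(1-\chi_\epsilon)u_k\|_{L^2}\to 0$; moreover $\nabla\big[(1-\chi_\epsilon)u_k\big]=(1-\chi_\epsilon)\nabla u_k-\epsilon^{-1}\chi'(x/\epsilon)u_k\,e_1$, where the first term $\to 0$ in $L^2$ by dominated convergence and the second is bounded pointwise by $\epsilon^{-1}\|\chi'\|_\infty\mathbf 1_{\{\epsilon\le x\le 2\epsilon\}}|u_k|\le 2\|\chi'\|_\infty\|\partial_x u_k\|_{L^\infty}\mathbf 1_{\{x\le 2\epsilon\}}$, hence also $\to 0$ in $L^2$. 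Thus $\chi_\epsilon u_k\to u_k$ in $H^1$ as $\epsilon\downarrow 0$, and each $\chi_\epsilon u_k$ vanishes in a neighbourhood of the boundary; multiplying by a cutoff of $(x,x')$ at infinity and convolving with a mollifier of width $<\epsilon/2$ gives functions in $C_c^\infty(\R_+\times\R^{n-1})$ converging to $\chi_\epsilon u_k$ in $H^1$. Therefore $u_k\in H_0^1(\R_+\times\R^{n-1})$, and the claim follows.

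The routine ingredients — density of $\mathcal S(\R^n)$ in $H^1(\R^n)$, the cutoff-at-infinity and mollification step for functions supported away from the boundary, the dominated-convergence estimates — I would not spell out. The only place the hypothesis ``$f$ odd in $x$'' is genuinely used is in producing approximants vanishing on $\{x=0\}$, which forces the pointwise bound $|u_k(x,x')|\le x\,\|\partial_x u_k\|_{L^\infty}$ that makes the boundary cutoff error $\epsilon^{-1}\chi'(x/\epsilon)u_k$ tend to zero in $L^2$; this boundary-layer estimate is the heart of the matter. Conceptually the same mechanism is the fiberwise one-dimensional Hardy inequality \eqref{zweiteungleichung}: $g(\cdot,x')$ lies in $H_0^1(\R_+)$ for a.e.\ $x'$ because its endpoint value is $Tf(x')=0$, so $\int\frac{|g(x,x')|^2}{4x^2}\,dx\le\int|\partial_x g(x,x')|^2\,dx$, and integrating in $x'$ gives $g/x\in L^2$ directly, which again controls $\epsilon^{-1}\chi'(x/\epsilon)g$ and lets one cut $g$ itself off the boundary without passing through the approximants. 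Either route works; the approximation route is the more self-contained since Step~1 has already built the $\widetilde f_k$.
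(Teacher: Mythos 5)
Your overall strategy is the same as the paper's (approximate $f$ by odd Schwartz functions, read off $Tf=0$ from Theorem~\ref{tracethm}, then cut off near $\{x=0\}$ using the linear vanishing of $f$ in $x$), but with a structural variation: the paper multiplies $f$ itself by $\chi(nx)$ and justifies the boundary layer via the quantitative estimate $\int_{\R^{n-1}}|f(x,x')|^2\,dx'\le x\,c(x)$, which it first proves for Schwartz functions and then passes to $f$ by approximation, whereas you instead cut off each smooth approximant $u_k$ and then invoke closedness of $H_0^1$. Both routes are legitimate, and your version sidesteps the need to transfer the boundary estimate from approximants to $f$.

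However, there is a genuine slip in the key estimate. You bound the boundary-layer term pointwise by
\[
\epsilon^{-1}\|\chi'\|_\infty\mathbf 1_{\{\epsilon\le x\le 2\epsilon\}}|u_k|\ \le\ 2\,\|\chi'\|_\infty\,\|\partial_x u_k\|_{L^\infty}\,\mathbf 1_{\{x\le 2\epsilon\}}
\]
and conclude ``hence also $\to 0$ in $L^2$''. That last function is constant in $x'$, so it is not in $L^2(\R_+\times\R^{n-1})$ at all: the deduction does not follow from the bound you wrote. The uniform $L^\infty$ bound $|u_k(x,x')|\le x\|\partial_x u_k\|_{L^\infty}$ is too crude; you need decay in $x'$. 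The fix is immediate since $u_k$ is Schwartz: replace $\|\partial_x u_k\|_{L^\infty}$ by the fibrewise envelope $v_k(x'):=\sup_{s}|\partial_x u_k(s,x')|$, which lies in $L^2(\R^{n-1})$, giving $|u_k(x,x')|\le x\,v_k(x')$ and hence
\[
\|\epsilon^{-1}\chi'(\cdot/\epsilon)\,u_k\|_{L^2}^2\ \le\ \epsilon^{-2}\|\chi'\|_\infty^2\,\|v_k\|_{L^2(\R^{n-1})}^2\int_\epsilon^{2\epsilon}x^2\,dx\ =\ \mathcal O(\epsilon)\ \to\ 0.
\]
(Equivalently, one can use the paper's bound $\int_{\R^{n-1}}|u_k(x,x')|^2\,dx'\le x\int_0^x\|\partial_x u_k(s,\cdot)\|_{L^2_{x'}}^2\,ds$.) With that correction the argument closes. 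Your final remark about running the Hardy inequality~\eqref{zweiteungleichung} fibrewise on $f$ itself is the right mechanism, but note that concluding $f(\cdot,x')\in H_0^1(\R_+)$ for a.e.\ $x'$ from $Tf=0$ in $L^2(\R^{n-1})$ requires an additional Fubini-type slicing argument that you would need to supply; the approximation route you wrote out is cleaner and, once the $L^2$ estimate is repaired as above, complete.
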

\begin{proof}
 The function $f$ can be approximated by a sequence of Schwartz functions $f_n$ which are odd in the $x$ variable. That it can be assumed that $f_n$ is odd comes from the fact that the odd part of $f_n$ defined by $ \frac{f_n(x,x')-f_n(-x,x')}{2}$ is closer to $f$ in the $H^1$ norm than the function $f_n$ itself.
  Then $f_n(0,x')=0$ and with the help of Theorem \ref{tracethm} it follows that $Tf=0$. 
 
 Observe now that for $g \in \mathcal{S}(\R^n)$ and $x>0$
 $$g(x,x')-g(0,x')= \int_0^x \frac{\partial}{\partial x} g(s,x') ds,$$
 which together with  the triangle inequality for the $L^2$ norm in the $s$ integral gives
 \begin{align}\nonumber
  \Big(\int_{\R^{n-1}} |g(x,x')&-g(0,x')|^2 dx'\Big)^\frac{1}{2} \\\nonumber \leq &\int_0^x \Big( \int_{\R^{n-1}} \big|\frac{\partial}{\partial x}  g(s,x')\big|^2 dx'\Big)^\frac{1}{2} ds.
   \end{align}
  Applying now Cauchy-Schwarz we obtain 
   \begin{align}\nonumber
  \Big(\int_{\R^{n-1}} |g(x,x')&-g(0,x')|^2 dx'\Big)^\frac{1}{2} \\ \nonumber \leq& \sqrt{x}  \bigg(\int_0^x  \int_{\R^{n-1}} \big|\frac{\partial}{\partial x}  g(s,x')\big|^2 dx' ds\bigg)^\frac{1}{2}.
     \end{align}
  Thus if $Tg=0$ then
    \begin{equation*}
   \int_{\R^{n-1}} |g(x,x')|^2 dx'\leq x  \int_0^x  \int_{\R^{n-1}} \big|\frac{\partial}{\partial x}  g(s,x')\big|^2 dx' ds.
      \end{equation*}
    With the help of Theorem \ref{tracethm} and with approximation by Schwartz functions it follows that f satisfies the same inequality, namely 
      \begin{equation*}
     \int_{\R^{n-1}} |f(x,x')|^2 dx' \leq x  \int_0^x  \int_{\R^{n-1}} \big|\frac{\partial}{\partial x}  f(s,x')\big|^2 dx' ds.
        \end{equation*}
     Therefore, since $f \in H^1$,  we find 
      \begin{equation}\label{ineqfxl2}
       \int_{\R^{n-1}} |f(x,x')|^2 dx'\leq x c(x), \text{ with } \lim_{x\rightarrow 0} c(x)=0.
      \end{equation}
      For the rest of the proof we identify $f$ with $f|_{\R_+ \times \R^{n-1}}$.
       We will prove that $f \in H_0^1(\R_+ \times \R^{n-1})$. 
    Let $\chi: \R_+ \to [0,1]$ be a $C^\infty$ nondecreasing function with $\chi(x)=0$ if $x \leq 1$ and $\chi(x)=1$ if $x \geq 2$. Let $\chi_n(x)=\chi(nx)$. Obviously $\chi_n f \in H_0^1(\R_+ \times \R^{n-1})$ for all $n \in \N$ and thus proving that $f \in H_0^1(\R_+ \times \R^{n-1})$ reduces to proving that $\chi_n f \rightarrow f$ in the $H^1$ norm. From the dominated convergence theorem it follows immediately that 
     $\chi_n f \rightarrow f$ in $L^2$ and
      $\chi_n \nabla f \rightarrow \nabla f$ in $L^2$. Thus it suffices to prove that
      $\chi_n' f \rightarrow 0$ in $L^2$. Indeed we have
      \begin{equation*}
      \|\chi_n' f\|_{L^2}^2=\int_{\R_+ \times \R^{n-1}} |f(x,x')|^2 n^2|\chi'(nx)|^2 dx dx'.
      \end{equation*} 
      Consequently the change of variable $y=nx$ together with Fubini's theorem gives
       \begin{equation*}
       \|\chi_n' f\|_{L^2}^2=\int_{R_+} |\chi'(y)|^2 n\Big(\int_{\R^{n-1}} |f\big(\frac{y}{n},x'\big)|^2 dx' \Big) dy.
       \end{equation*} 
       Hence, using  \eqref{ineqfxl2} we arrive at
        \begin{equation}
        \|\chi_n' f\|_{L^2}^2 \leq \int_{\R_+} |\chi'(y)|^2 y c\Big(\frac{y}{n}\Big) dy \rightarrow 0,
        \end{equation}
        because $y \leq 2$ on supp $\chi'$. This concludes the proof of Theorem \ref{tracethm2}.
\end{proof}

\end{document}